\newcommand{\ab}{{\mathbf a}}
\newcommand{\fb}{{\mathbf f}}
\newcommand{\gb}{{\mathbf g}}
\newcommand{\hb}{{\mathbf h}}
\newcommand{\lb}{{\mathbf l}}
\newcommand{\mb}{{\mathbf m}}
\newcommand{\vb}{{\mathbf v}}
\newcommand{\xb}{{\mathbf x}}
\newcommand{\yb}{{\mathbf y}}
\newcommand{\Yc}{\mathcal{Y}}
\newcommand{\Sc}{\mathcal{S}}
\newcommand{\Cd}{{\mathbb C}}
\newcommand{\Zd}{\mathbb{Z}}
\newcommand{\rank}{\textsc{rank}}
\newcommand{\Null}{\textsc{Nul}}
\newcommand{\hank}{\mathscr{H}}
\newcommand{\conv}{\mathscr{C}}
\newcommand{\zerob}{\mathbf{0}}
\newcommand{\hbk}{\boldsymbol{\mathfrak{h}}}
\newcommand{\vect}{\textsc{Vec}}
\theoremstyle{definition} \theoremstyle{plain}
\newtheorem{theorem}{Theorem}[section]
\newtheorem{lemma}[theorem]{Lemma}
\newtheorem{proposition}[theorem]{Proposition}
\newcommand\ruline{\bgroup\markoverwith{\textcolor{red}{\rule[-0.7ex]{2pt}{0.8pt}}}\ULon}
\begin{document}

% Title.
% ------

\title{A general framework for compressed sensing and parallel MRI using annihilating filter based low-rank Hankel matrix
\author{Kyong Hwan Jin, Dongwook Lee, and Jong Chul Ye,~\IEEEmembership{Senior Member,~IEEE}}
\thanks{The authors are with Dept. of Bio and Brain Engineering,  KAIST, Daejeon 305-701, Republic of Korea.  Email:jong.ye@kaist.ac.kr.}
\thanks{This work was supported by Korea Science and Engineering Foundation under Grant NRF-2014R1A2A1A11052491.}}

%\author{Ok Kyun Lee,~\IEEEmembership{Student Member,~IEEE}, Hua Li, Sung Ho Tak, \\and
%Jong~Chul~Ye,~\IEEEmembership{Member,~IEEE}}
%%, and   }

%\markboth{SUBMITTED TO IEEE Transactions on Medical Imaging}{Shell \MakeLowercase{\textit{et al.}}: } %%

%\ninept
%
\maketitle
%
%\vspace{-0.7cm}
\begin{abstract}
\setstretch{1}
Parallel MRI (pMRI) and compressed sensing MRI (CS-MRI) have been considered as two distinct reconstruction problems. 
Inspired by recent k-space interpolation methods, an annihilating filter based low-rank Hankel matrix approach (ALOHA)  is proposed as a general framework for sparsity-driven k-space interpolation method which unifies pMRI and CS-MRI.
Specifically, our framework  is based on the fundamental duality between the transform domain sparsity in the primary space and the low-rankness of {\em weighted} Hankel matrix in the reciprocal space,
which converts pMRI and CS-MRI to a k-space interpolation problem using structured matrix completion.
%
%exploits an annihilating filter relationship originating from the sparsity in the transform domain as well as from parallel acquisition physics. This results in a rank-deficient Hankel structured matrix, whose missing data can be recovered with a low rank structured matrix completion algorithm after a k-space weighting. 
%In particular, when the underlying image can be sparsified with a wavelet transform, the low rank matrix completion problem can be solved with a multi-scale pyramid resulting in efficient computation.
Using  theoretical results from the latest compressed sensing literatures,  we showed that the required sampling rates
for ALOHA may achieve the optimal rate. % nearly optimal.
 Experimental results with in vivo data for single/multi-coil imaging as well as dynamic imaging confirmed that the proposed method outperforms the state-of-the-art pMRI and CS-MRI.
% By reformulating the pMRI and CS-MRI as a weighted k-space interpolation problem that can be solved 
%using a low rank Hankel structured matrix completion, the ALOHA framework provides better insight into MRI reconstruction problems.
%\vspace{15 mm}
\end{abstract}

\begin{IEEEkeywords}
Parallel MRI, Compressed Sensing,  Annihilating filter, Structured low rank block Hankel matrix completion,  wavelets, Pyramidal representation
\end{IEEEkeywords}

%\vspace{\stretch{1}}

\vspace*{1cm}

\noindent{
%{\bf Submitted to  {\em IEEE Trans. Medical
%Imaging}, 2011}
%\\
Correspondence to: \vspace*{0.2cm}\\ 
 Jong Chul Ye,  Ph.D. ~~\\
Professor \\
\vspace{-0.0cm}Dept. of Bio and Brain Engineering,  KAIST \\
\vspace{-0.0cm}291 Daehak-ro Yuseong-gu, Daejon 305-701, Republic of Korea \\
\vspace{-0.0cm}Email: jong.ye@kaist.ac.kr \\
\vspace{-0.0cm}Tel: 82-42-350-4320 \\
\vspace{-0.0cm}Fax: 82-42-350-4310 \\
 }

\vspace{\stretch{1}}

%\IEEEpeerreviewmaketitle

\newpage
\baselineskip 0.29in
\section{Introduction}

Magnetic resonance imaging (MRI) is an imaging system that sequentially acquires  k-space data corresponding to the Fourier transform of an object.
This
 enables us to apply various advanced signal processing techniques. Recently,  compressed sensing theory \cite{donoho2006compressed,candes2006robust} has been used in accelerated MRI \cite{lustig2007sparse,liang2009accelerating,haldar2011compressed}.  Compressed sensing algorithms can restore original signals from much less k-space data by exploiting the sparsity of an unknown image in  total variation (TV) or wavelet transform domains, and incoherent sampling schemes such as Gaussian random or Poisson disc are usually required.
Accurate MRI reconstruction from less data  makes compressed sensing a hot topic in the research community; thus, it has been applied across many  different application areas such as in pediatric imaging \cite{vasanawala2010improved}, dynamic cardiac MRI\cite{jung2009k,jung2010radial,yoon2014motion}, perfusion imaging\cite{lingala2011accelerated}, angiography\cite{ccukur2009improving}, and so on.

On the other hand, parallel MRI (pMRI) \cite{pruessmann1999sense,griswold2002generalized,pruessmann2006encoding} exploits the diversity  in the  receiver coil sensitivity maps
that are multiplied by an unknown image.
This provides additional spatial information for the unknown image, resulting in  accelerated MR data acquisition through  k-space sample reduction.
Representative parallel imaging algorithms such as SENSE (sensitivity encoding) \cite{pruessmann1999sense} or GRAPPA (generalized autocalibrating partially parallel acquisitions) \cite{griswold2002generalized} require regularly sampled k-space data for computationally efficient reconstruction.
Moreover, additional k-space data, the so-called  auto calibration  (ACS) lines, are often required to estimate the coil sensitive maps or GRAPPA kernels \cite{griswold2002generalized}.

Because the aim of the two approaches is accelerated acquisition by reducing the k-space data, extensive research efforts have been made to synergistically combine the two for further acceleration. One of the most simplest approaches  can be a SENSE type approach that explicitly utilizes the estimated coil maps to obtain an augmented compressed sensing problem:
\begin{eqnarray}\label{eq:sense}
\min_{\fb} \|W\fb\|_1 && \mbox{subject to} \quad  \gb =  \begin{bmatrix}\gb_1 \\ \vdots \\\gb_r\end{bmatrix} = \begin{bmatrix} A [S_1] \\ \vdots \\ A [S_r] \end{bmatrix} \fb 
\end{eqnarray}
%where
%$$\yb = \left \| \begin{bmatrix}\gb_1 \\ \vdots \\\gb_C\end{bmatrix} - \begin{bmatrix} F [S_1] \\ \vdots \\ F [S_C] \end{bmatrix} \xb \right\|^2 \leq \epsilon \ ,
%\end{eqnarray}
where $\fb$ and $\gb_i$ denote the unknown image and the k-space measurements from the $i$-th coil, respectively; $A$ is a  subsampled Fourier matrix; 
$W$ is a sparsifying transform, and $[S_i]$ denotes a diagonal matrix
whose diagonal elements come from the $i$-th coil sensitivity map.
The multichannel version of k-t FOCUSS \cite{jung2009k} is one of the typical examples of such approaches.  
On the other hand, $l_1$-SPIRiT ($l_1$- iTerative Self-consistent Parallel Imaging Reconstruction)  \cite{lustig2010spirit} utilizes the GRAPPA type constraint as an additional constraint  for a compressed sensing problem:
\begin{eqnarray}
\min_{F} &&\|\Psi F \|_{1,2} \\
\mbox{subject to}&&\quad G= A F\\
&& \vect(F)=M \cdot \textsc{Vec}(F)
\end{eqnarray}
where $\|\cdot\|_{1,2}$ denotes the $(1,2)$-mixed norm of a matrix, 
$F=\begin{bmatrix}\fb_1 & \fb_2 & \cdots &\fb_r\end{bmatrix}, ~G=\begin{bmatrix}\gb_1 & \gb_2 & \cdots & \gb_r\end{bmatrix}$ denote the
unknown images and their k-space measurements for the given set of coils,
and $\Psi$ denote a discrete wavelet transform matrix, and $M$ is an image domain GRAPPA operator, and $\vect(\cdot)$ is the vectorization operator.
In both approaches, an accurate estimation of coil sensitivity maps or GRAPPA kernel is essential to fully exploit the coil sensitivity diversity.   

In order to overcome these difficulties,  calibration-less parallel imaging methods have been extensively investigated, among which SAKE (simultaneous autocalibrating and k-space estimation)\cite{shin2013calibrationless} represents one of the first steps. In SAKE, 
%rather than estimating the unknown images by solving the under-determined inverse problem,  
the missing k-space elements are reconstructed by imposing the data consistency and the structural maintenance constraints of the block Hankel structure matrix. However, the origin of the low rankness in the Hankel structured matrix for the case of a single coil measurement was not extensively investigated, and it was not clear whether SAKE could outperform the compressed sensing approach when it is applied to single coil data.
Haldar \cite{haldar2014low,haldar2014ploraks} later discovered  that a Hankel structured matrix  constructed by a single coil k-space measurement is low-ranked  when an unknown image has finite support or a slow-varying phase. Based on this observation, he developed the so called  LORAKS (Low-rank modeling of local k-space neighborhoods) algorithm \cite{haldar2014low} and its parallel imaging version, P-LORAKS (Low-rank modelling of local k-space neighborhoods with parallel imaging data) \cite{haldar2014ploraks }.
However,  it is not clear how the existing  theory can  deal with large classes of image models that  are not finite supported but  can be  sparsified using various transforms such as wavelet transforms or total variations (TV), etc.

Therefore, one of the main goals of this paper is to  develop a  theory that
unifies and generalizes k-space low-rank methods to also allow for transform sparsity models which are critical for practical MRI applications. 
Toward this goal,  we show that  the transform domain sparsity in the signal space can be directly related to the existence of {\em annihilating filters}  \cite{vetterli2002sampling,dragotti2007sampling,maravic2005sampling} in the {\em weighted} k-space.
Interestingly, the commutative relation between an annihilating filter and weighted k-space measurements provides a rank-deficient Hankel structured matrix, whose rank is
determined by the sparsity level of the underlying signal in  a transform domain.
Therefore, by performing a low-rank matrix completion approach, the missing weighted k-space data in the Hankel structured matrix can be recovered, after which the
original k-space data can be recovered by removing the weights.

Interestingly, the new framework is so general that it can cover important  compressed sensing MR approaches in very unique ways.
For example, 
 when an image can be sparsified with wavelet transforms,  %another important contribution comes when the underlying images can be sparsified using dyadic biorthogonal wavelet transforms. In this 
 the low rank structured matrix completion problem  can be solved using a pyramidal decomposition after applying % and 
scale dependent  k-space weightings. 
Specifically,   a low rank  Hankel matrix completion algorithm can be progressively applied from fine to coarse scale  to reduce the overall computational burden while
maintaining noise robustness.
%Since the required k-space weighting is data-independent,   the low rank structured matrix completion can be done in the weighted k-space domain to estimate the missing k-space ss.
In addition,  %for parallel MRI, % we investigate multi-channel generalization to unify the parallel MRI with ALOHA.
%Specifically, we confirm that
we show that there exist additional inter-coil annihilating filter relationships that are unique in pMRI,  which can be
 utilized to construct a concatenated Hankel matrix which is low ranked.
%and
% derive a multi-coil extension of the ALOHA scheme.
% If the uniform weighting scheme is used, then the construction of the stacked Hankel matrix turns out to be equivalent
%to those of SAKE and P-LORAKS.
We further substantiate that the
multi-channel stacking of the weighted Hankel structure matrix may fully exploit the coil diversity thanks to the relationship to the algebraic bound of multiple measurement
vector (MMV) compressed sensing \cite{kim2012compressive,lee2012subspace,davies2012rank}.
%
%Because sparsity in the wavelet domain or TV is the main property that has been exploited by existing compressed sensing approaches,
%our  ALOHA framework 
%can compete with the existing CS-MRI algorithms, as demonstrated by our {\em in vivo} experiments.  
%Moreover, 
%unlike the existing  CS-MRI algorithms, all the processing are performed in k-space domain and the final reconstructions are obtained by applying the fast Fourier transform (FFT); hence, the algorithm can be used as a preprocessing step for the standard MR reconstruction flow.

Another important advantage of the proposed algorithm is that,  compared to the existing CS-MRI, the reconstruction errors are  usually scattered throughout the entire images rather than exhibiting systematic distortion along edges because the annihilating filter relationships are specifically designed for estimating the edge signals.
 Given that many diagnostic errors are caused by the systematic distortion of images,
 we believe that our ALOHA framework may have great potential in clinical applications.

The remainder of this paper is as follows. Section~\ref{sec:theory} discusses the  fundamental duality between the transform domain sparsity and the low-rankness of Hankel structured
matrix in k-space.
%annihilation structures that are commonly observed in image inpainting algorithms. 
In Section \ref{sec:mri}, a detailed description of the proposed method for MR reconstruction will be provided. 
Section~\ref{sec:method} then explains the implementation detail.
Experimental results are provided in Section \ref{sec:result}, which is followed by the discussion in Section~\ref{sec:discussion} and conclusion in Section \ref{sec:conclusion}.

% 
% this is the main advantages compared to the existing compressed sensing approaches that tend to smooth out edge information due to regularisation terms.
% In the following,  we discuss the theory for generalized ALOHA and confirm its advantages using real in vivo data.

%\section{Theory}

\section{Fundamental Duality between Sparsity and Low-Rankness} %\label{sec:problem}
\label{sec:theory}

This section describes the  fundamental dual relationship between transform domain sparsity and low rankness of {\em weighted} Hankel matrix,  which is the key idea of the proposed algorithm.
For better readability,  the theory here is  outlined by assuming 1-D signals, 
but  the principle can be extended for multidimensional signals.

Note that typical signals of our interest may not be sparse in the image domain, but can be sparsified in a transform domain. 
%Our goal is to find a generalized ALOHA framework whose sampling rate can be reduced down to the transform domain sparsity level.
%Toward this goal, 
For example, consider %one of the most general transform domain sparse signal models 
$\mathrm{L}$-spline signal model 
from the theory of {\em sparse stochastic processes}  \cite{unser2014unified,unser2014unified2}. Specifically, the signal $f$ of our interest is assumed to satisfy the 
following  partial differential equation: 
\begin{equation}\label{eq:ssp}
\mathrm{L} f = w
\end{equation}
where $\mathrm{L}$ denotes a constant coefficient linear  differential equation (or whitening operator in \cite{unser2014unified,unser2014unified2}): 
\begin{equation}%\label{eq:L}
\mathrm L := a_K\mathrm \partial^K+a_{K-1}\mathrm \partial^{K-1}+\ldots+a_1\mathrm \partial+a_0
\end{equation}
and $w$ is a driving continuous domain sparse signal given by %(or white noises in   \cite{unser2014unified,unser2014unified2}).
%In this paper,  we are particularly interested in the following form of the sparse driving signal
\begin{eqnarray}\label{eq:w}
w(x) = \sum_{j=0}^{k-1} c_j \delta \left( x-x_j \right) \, \quad x_j \in [0, \tau].
\end{eqnarray}
Here, without loss of generality, we set $\tau=n_1$ for a positive integer  $n_1 \in \Zd$.
This model includes many class of signals with finite rate of innovations \cite{vetterli2002sampling,dragotti2007sampling,maravic2005sampling}.
For example, if the underlying signal is piecewise constant, we can set $\mathrm L$ as the first differentiation. In this case, $f$ corresponds to the total variation (TV) signal model, and this TV signal model
will be extensively used throughout the paper.

Now, by taking the Fourier transform of \eqref{eq:ssp}, we have
\begin{eqnarray}\label{eq:y}
%\hat y(\omega) =  
\hat y(\omega) := {\cal F}\{ \mathrm L f(x) \} = \hat l (\omega) \hat f(\omega)  = \sum_{j=0}^{k-1} a_j e^{-i\omega x_j}
\end{eqnarray}
where 
\begin{eqnarray}
\hat l(\omega) =  a_K (i\omega)^K +a_{K-1} (i\omega)^{K-1}+\ldots+a_1(i\omega)+a_0
\end{eqnarray}
In standard Nyquist sampling, % highly accelerated MRI,  
we  should measure discrete set of  Fourier samples
at $\{\omega_i\}_{i=1}^m$ from a deterministic  grid, whose
  grid size should be set to  
the Nyquist limit
%\begin{eqnarray}\label{eq:s}
$\Delta = 2\pi/n_1$ to avoid aliasing artifacts;
 so
%\end{eqnarray}
%and the ambient space dimension $n_1$ and the resulting aperture size $n_1 \Delta$ are determined by considering the time domain resolution.
 the discrete specturm can be represented as
\begin{eqnarray}\label{eq:spec}
\hat y[m] := \left.\hat y(\omega)\right|_{\omega = m\Delta}  = \hat l[m\Delta]\hat f[m\Delta]
 =   \sum_{j=0}^{k-1} c_j e^{-i2\pi m x_j/n_1 } , \quad 
\end{eqnarray} 
for $m \in [0, \cdots, n_1-1] $.
The discrete spectral sampling model in Eq.~\eqref{eq:spec} implies that 
 the unknown signal in the image domain is  an infinite periodic streams of Diracs with a period $n_1$, 
which
is indeed a signal with the finite rate of innovation  (FRI) with rate $\rho = 2k/n_1$ \cite{vetterli2002sampling,dragotti2007sampling,maravic2005sampling}.
Therefore,   theoretical results from the FRI sampling theory can be used \cite{vetterli2002sampling,dragotti2007sampling,maravic2005sampling}. In particular,
 the FRI sampling theory tells us that we can find a minimum length {\em annihilating filter}  $\hat h[n]$ such that
\begin{eqnarray}\label{eq:zero}
(\hat h\ast \hat y)[n] =  \sum_{l=0}^k \hat h[l]\hat y[n-l]  = 0, \quad \forall n .
\end{eqnarray}
The specific form of the {\em minimum length} annihilating filter $\hat h[n]$  for the case of \eqref{eq:spec} can be found
in  \cite{vetterli2002sampling}, which has the following z-transform
\begin{eqnarray}\label{eq:afilter}
\hat h(z)  &=& \sum_{l=0}^k \hat h[l] z^{-l} = \prod_{j=0}^{k-1} (1- e^{-i2\pi t_j/\tau} z^{-1}) \ ,
\end{eqnarray}
whose filter length is given by $k+1$  \cite{vetterli2002sampling}.

Now,  one of the most important advances of the proposed method over the existing sampling theory of FRI signals is that,  rather than using the minimum length  annihilating filter,
we allow a longer length annihilating filter which is essential to make the associated Hankel matrix low-ranked.
Specifically, 
if $\hat h[n]$ is a minimum length annihilating filter with $k+1$ filter taps, then for any $k_1\geq 1$ tap filter $\hat a[n]$, it is easy to see that the following filter with $\kappa= k+k_1$ taps is also an annihilating filter for $\hat y[n]$:
\begin{eqnarray}
 \hat h_a[n] = (\hat a \ast \hat h ) [n] \  .
\end{eqnarray}
Accordingly, 
the matrix representation of $(\hat h_a \ast \hat y)[n]=0$ is given by
$$\conv(\hat \yb) \bar{\hat \hb}_a = \zerob$$
where 
 $\bar{\hat\hb}_a$ denotes 
a vector that reverses  the order of the elements  in
\begin{equation}\label{eq:ha}
\hat\hb_a=\left[\hat h_a[0],\cdots, \hat h_a[\kappa-1]\right]^T,
\end{equation}
and
\begin{eqnarray}\label{eq:hankorg}
\conv(\hat \yb) =\left[
        \begin{array}{cccc}
        \vdots & \vdots & \ddots & \vdots \\
        \hat y[-1]  & \hat y[0] & \cdots   & \hat y[\kappa-2]   \\ \hline
      \hat y[0]  & \hat y[1] & \cdots   & \hat y[\kappa-1]   \\%  &\cdots&   \gb_i(1)\\
     \hat y[1]  & \hat y[2] & \cdots &   \hat y[\kappa] \\
         \vdots    & \vdots     &  \ddots    & \vdots    \\
      \hat y[n_1-\kappa]  & \hat y[n_1-\kappa+1] & \cdots & \hat y[n-1]\\ \hline
            \hat y[n_1-\kappa+1]  & \hat y[n_1-\kappa+2] & \cdots & \hat y[n_1]\\
              \vdots & \vdots & \ddots & \vdots \\
        \end{array}
    \right]
 \end{eqnarray}
Accordingly,  by choosing $n_1$ such that $n_1-\kappa+1\geq k$ and
 removing the boundary data outside of the  sample indices $[0,\cdots, n_1-1]$, we can construct the following matrix equation:
\begin{eqnarray}
 \hank(\hat \yb) \bar{\hat \hb}_a = \mathbf{0},
\end{eqnarray}
where the Hankel structure matrix $\hank(\hat \yb)  \in \Cd^{(n_1-\kappa+1)\times \kappa} $ is constructed as
    \begin{eqnarray}\label{eq:X2}
\hank(\hat \yb) =\left[
        \begin{array}{cccc}
      \hat y[0]  & \hat y[1] & \cdots   & \hat y[\kappa-1]   \\%  &\cdots&   \gb_i(1)\\
     \hat y[1]  & \hat y[2] & \cdots &   \hat y[\kappa] \\
         \vdots    & \vdots     &  \ddots    & \vdots    \\
      \hat y[n_1-\kappa]  & \hat y[n_1-\kappa+1] & \cdots & \hat y[n_1-1]\\
        \end{array}
    \right] 
    \end{eqnarray}
 Then, we have the following result:
\begin{proposition}\label{prp:duality}
Let $k$ denotes the number of Diracs within the support $[0,n_1]$. Suppose, furthermore,  the annihilating filter length $\kappa$ is given by $n_1-\kappa+1\geq k$. Then,
for a given  Hankel structured matrix $\hank(\hat \yb)$ in \eqref{eq:X2}, we have
\begin{eqnarray}
\rank \hank(\hat \yb) \leq  k,  % \quad\leq \min\{n_1-\kappa+1, \kappa\} \ ,
\end{eqnarray} 
where $\rank(\cdot)$ denotes a matrix rank.
\end{proposition}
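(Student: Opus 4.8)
The plan is to produce an explicit rank-revealing factorization of $\hank(\hat\yb)$ whose inner dimension equals $k$, and then invoke the elementary bound $\rank(\mathbf{P}\mathbf{Q})\le\min\{\rank\mathbf{P},\rank\mathbf{Q}\}$; concretely I would expose the Vandermonde (Prony-type) structure hidden in the sum-of-exponentials form of the spectral samples. Setting $u_j:=e^{-i2\pi x_j/n_1}$ for $j=0,\dots,k-1$, relation \eqref{eq:spec} reads $\hat y[m]=\sum_{j=0}^{k-1}c_j\,u_j^{\,m}$ for every $m\in\{0,\dots,n_1-1\}$, which is exactly the range of indices occurring in \eqref{eq:X2}. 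Since the $(p,q)$ entry of $\hank(\hat\yb)$ is $\hat y[p+q]$ with $p\in\{0,\dots,n_1-\kappa\}$ and $q\in\{0,\dots,\kappa-1\}$, each entry splits as $\hat y[p+q]=\sum_{j=0}^{k-1}\bigl(c_j\,u_j^{\,p}\bigr)\,u_j^{\,q}$, which is the entrywise form of
\begin{equation}
\hank(\hat\yb)=\mathbf{V}_{L}\,\mathbf{C}\,\mathbf{V}_{R}^{T},\qquad \mathbf{C}=\mathrm{diag}(c_0,\dots,c_{k-1})\in\Cd^{k\times k},
\end{equation}
where $\mathbf{V}_{L}\in\Cd^{(n_1-\kappa+1)\times k}$ and $\mathbf{V}_{R}\in\Cd^{\kappa\times k}$ are Vandermonde with $(\mathbf{V}_{L})_{p,j}=u_j^{\,p}$ and $(\mathbf{V}_{R})_{q,j}=u_j^{\,q}$; equivalently $\hank(\hat\yb)=\sum_{j=0}^{k-1}c_j\,\mathbf{a}_j\mathbf{b}_j^{T}$ is a sum of $k$ rank-one matrices.

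From this the conclusion is immediate: $\mathbf{V}_{L}$ has only $k$ columns, so $\rank\mathbf{V}_{L}\le k$, whence $\rank \hank(\hat\yb)=\rank(\mathbf{V}_{L}\mathbf{C}\mathbf{V}_{R}^{T})\le k$. The hypothesis $n_1-\kappa+1\ge k$, together with $\kappa=k+k_1\ge k+1$, makes both Vandermonde factors ``tall'', so that when the nodes $x_j$ are distinct and the weights $c_j$ nonzero one in fact gets equality; but only ``$\le$'' is required for the proposition. An equivalent argument that never leaves the annihilating-filter language of this section is a null-space count: letting the $k_1$-tap filter $\hat a$ range over a basis of length-$k_1$ sequences, the $\kappa$-tap filters $\hat h_a=\hat a\ast\hat h$ span a $k_1$-dimensional subspace (convolution by $\hat h$ being injective), and every reversed vector $\bar{\hat\hb}_a$ lies in $\ker\hank(\hat\yb)$ by the identity $\hank(\hat\yb)\,\bar{\hat\hb}_a=\zerob$ already established; hence $\dim\ker\hank(\hat\yb)\ge k_1=\kappa-k$ and $\rank \hank(\hat\yb)\le\kappa-(\kappa-k)=k$.

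I do not expect a genuine obstacle along either route, since both are essentially structural rather than computational; the only spots that deserve a careful line are the degenerate configurations (some $x_j$ coinciding or some $c_j=0$, which can only lower the rank and hence remain consistent with the stated bound), and checking that the index bookkeeping in \eqref{eq:X2} --- in particular the discarding of the boundary rows lying outside $[0,\dots,n_1-1]$ --- matches the shifted-exponential expansion so that the displayed identity is a literal equality of matrices rather than an approximation.
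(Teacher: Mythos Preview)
Your proposal is correct, and in fact you have written down both the paper's proof and an alternative. Your \emph{secondary} null-space count is essentially the paper's argument: the paper packages the map $\hat\ab\mapsto\hat\hb_a$ as the full-rank convolution matrix $\conv(\hat\hb)\in\Cd^{\kappa\times k_1}$, observes that its range (after reversal) sits inside $\ker\hank(\hat\yb)$, and concludes $\rank\hank(\hat\yb)\le\kappa-k_1=k$. Your \emph{primary} route, the Vandermonde factorization $\hank(\hat\yb)=\Vb_L\Cb\Vb_R^{T}$, is the one the paper explicitly mentions \emph{after} the proof (citing Chen--Chi) but deliberately does not use: the paper argues that the annihilating-filter/null-space argument covers signal classes---e.g., streams of differentiated Diracs---for which the Hankel matrix admits no Vandermonde decomposition, whereas the factorization approach is tied to the pure sum-of-exponentials model. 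In short, your Vandermonde argument is the more direct and rank-revealing one for the Dirac-stream model actually stated in the proposition; the paper's chosen proof buys generality beyond that model at the cost of being slightly less explicit about the column space.
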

\begin{proof}
Note that  \eqref{eq:ha} can be represented as
\begin{eqnarray}
\hat\hb_a  &=&  \conv(\hat \hb) \hat \ab 
\end{eqnarray}
where $\hat\ab =\begin{bmatrix} \hat a[0] & \cdots & \hat a[k_1-1]\end{bmatrix}$ and 
$\conv(\hat \hb) \in \Cd^{\kappa \times k_1}$ is a Toeplitz structured convolution matrix from $\hat \hb$:
\begin{eqnarray}
\conv(\hat \hb) =\begin{bmatrix} \hat h[0] & 0 & \cdots & 0 \\ \hat h[1]  & \hat h[0] & \cdots & 0 \\ \vdots &  \vdots & \ddots & \vdots  \\
\hat h[k-1] & \hat h[k-2] & \cdots & \hat h[k-k_1]  \\  \vdots & \vdots & \ddots & \vdots \\ 0 & 0 & \cdots &  \hat h[k-1] \end{bmatrix}  \in \Cd^{\kappa \times k_1 }
 \end{eqnarray}
where $\kappa=k+k_1$. Since $\conv(\hat \hb)$ is a convolution matrix, it is full ranked and 
we can show that  
$$\dim \conv(\hat \hb)=   k_1,$$
where $\dim(\cdot)$ denotes the dimension of a matrix.
Moreover, the range space of $\conv(\hat \hb)$ now generates the null space of the Hankel matrix, so it is easy to show
$$k_1= \dim \conv(\hat \hb)\leq  \dim \Null \hank(\hat\yb) ,$$
where $\Null(\cdot)$ represent a null space of a matrix.
%Owing to the condition $(n_1+1)/2\geq \kappa$, the number of columns in  $\hank(\hat \yb)$ cannot be larger than that of rows.
Hence, we have
\begin{eqnarray}
\rank \hank(\hat \yb)=  \min\{\kappa, n_1-\kappa_1+1\} - \dim \Null \hank(\hat\yb) \leq \kappa - k_1 = k . % \quad\leq \min\{n_1-\kappa+1, \kappa\} \ ,
\end{eqnarray}
Q.E.D. %which is equal to the sparsity level.  %This concludes the proof.
\end{proof}

%Under the condition of Proposition~\ref{prp:duality},
This rank condition can be tightened under more restricted set-ups.
Specifically, if $n_1 -\kappa +1 \geq k+1$,   the authors in
\cite{chen2014robust} derived the following decomposition:
\begin{eqnarray}\label{eq:van}
 \hank(\hat \yb)=  L~ D~ R^T   \quad , %\min\{r, n-d+1\},  % \quad\leq \min\{n-d+1, d\} \ ,
\end{eqnarray} 
where
\begin{eqnarray}\label{eq:L}
L  &=& \begin{bmatrix} 1 & 1 & \cdots & 1 \\ z_0 & z_1 & \cdots & z_{k-1} \\ \vdots & \vdots & \ddots & \vdots \\ z_0^{n_1-\kappa} & z_1^{n_1-\kappa} & \cdots &
z_{k-1}^{n_1-\kappa} \end{bmatrix} \in \Cd^{(n_1-\kappa+1)\times k} \\
%\end{eqnarray}
%\begin{eqnarray}
R  &=& \begin{bmatrix} 1 & 1 & \cdots & 1 \\ z_0 & z_1 & \cdots & z_{k-1} \\ \vdots & \vdots & \ddots & \vdots \\ z_0^{\kappa-1} & z_1^{\kappa-1} & \cdots &
z_{k-1}^{\kappa-1} \end{bmatrix} \in \Cd^{\kappa \times k} \label{eq:R}
\end{eqnarray}
and
\begin{eqnarray}
D &=&   \begin{bmatrix} c_0 &  0 & \cdots & 0 \\ 0 & c_1 & \cdots & 0 \\ \vdots & \vdots & \ddots & \vdots \\ 0 & 0 & \cdots & c_{k-1} \end{bmatrix},  \label{eq:D}
\end{eqnarray}
and
$z_j = e^{-i2\pi t_j/n_1 }$ for $j=0,\cdots, k-1$.
Because 
%$d\geq r+1$, $n-d+1\geq r+1$ and
%$\hat x[l]$ comes from \eqref{eq:spec0},   the associated Hankel matrix
%$\hank(\hat\xb)$ has the decomposition structure as in \eqref{eq:LDR}.
%Furthermore, 
$L$ and $R$ are Vandermonde matrices that are full column-ranked, in this case we can further show that
$$\rank\hank(\hat \yb) = \rank(D) = k .$$
%which appears stronger than Proposition~\ref{prp:duality}. 

However,  as shown in \cite{ye2015compressive}, 
Proposition~\ref{prp:duality} covers more general classes of Hankel matrices that do not necessarily have the Vandermonde decomposition structure in \eqref{eq:van}.
For example,  Proposition~\ref{prp:duality}  is still valid even when the underlying signal is stream of differentiated Diracs,
in which case the conventional low-rank argument in   \cite{chen2014robust} is not valid because it  does not result in a Vandermonde decomposition structure of the associated
Hankel matrix \cite{ye2015compressive}.  Therefore,  Proposition~\ref{prp:duality} will be used   throughout the paper to justify the low-rankness.

It is important to emphasize that Proposition~\ref{prp:duality} implies the following fundamental duality: 
$$\mbox{sparse signal}   \overset{\mathcal{F}} \Longleftrightarrow  \mbox{low-ranked Hankel  structured matrix},$$
where $\mathcal{F}$ denotes the Fourier transform. Therefore, if some of the Hankel matrix elements are missing and one can measure samples only on the index set $\Omega$,  the fundamental duality suggests a way
to recover these elements, which is based on low rank matrix completion \cite{candes2009exact,cai2010singular,candes2010power,gross2011recovering,keshavan2010matrix}:
%Thanks to the low-rankness, the missing spectral elements can be recovered using the following matrix completion problem:
\begin{eqnarray}\label{eq:EMaC2}
(P)
 &\min_{\mb\in \Cd^{n} } & \rank \hank (\mb)  \\
&\mbox{subject to } & P_\Omega(\mb) = P_\Omega(\hat \lb \odot \hat \fb) \nonumber  \  ,
\end{eqnarray}
where $\odot$ denotes the Hadamard product, and $\hat\lb$ and $\hat \fb$ denotes the vectors composed of discrete samples of $\hat l(\omega)$ and $\hat f(\omega)$, respectively.
Note that by solving $(P)$ we can obtain  the missing data $m(\omega)= \hat l(\omega) \hat f(\omega)$ in the Fourier domain.
Then, the missing  spectral data $\hat f(\omega)$ can be obtained by dividing by the weight, i.e. $\hat f(\omega) =  m(\omega)/\hat l(\omega)$, when $\hat l(\omega) \neq 0$.
As for the  signal $\hat f(\omega)$ at the spectral null of the filter $\hat l(\omega)$,  the corresponding elements should be specifically obtained as sampled measurements,
which can be easily done in MR acquisition.  % before we attempt to solve $(P_w)$.

Among various algorithm to solve  matrix completion problem $(P)$,   one of the most well-characterised approaches is a convex relaxation approach using the nuclear norm \cite{candes2009exact,cai2010singular,gross2011recovering,keshavan2010matrix}.
More specifically,   the missing k-space elements can be found by solving the following nuclear norm minimization problem:
\begin{eqnarray}\label{eq:EMaC}
(P1) \quad & \min_{\mb\in \Cd^{n_1} } & \|\hank(\mb)\|_* \\
&\mbox{subject to } & P_\Omega(\mb) = P_\Omega(\hat \lb \odot \hat \fb)\nonumber
\end{eqnarray}
where $\|\cdot\|_*$ denotes the matrix nuclear norm. %and $P_\Omega$ is the projection operator on the k-space sampling index set $\Omega$.
In this case,  we can further use the result by Chen and Chi  \cite{chen2014robust} to provide a performance guarantee.
%
%
%One of the important  contributions of this paper is to show that the formulation (P) for MRI provides nearly same optimal sampling rate
%as in the standard compressed sensing MR approach:
\begin{theorem}\cite{chen2014robust}
Suppose that the sampling index set $\Omega$ is chosen randomly among $[0, n_1-1]$ and $|\Omega|=m$.
Then, if the number of k-space sample $m$ is given by
\begin{equation}\label{eq:sno}
m> c_1 \mu_1 c_s k \log^4(n_1)
\end{equation}
for some constant $c_1$ and an incoherence parameter $\mu_1$, and 
\begin{equation}\label{eq:cs}
c_s = \max\{ n_1/\kappa,  n_1/(n_1-\kappa+1) \} \ ,
\end{equation}
then  the perfect recovery of the missing spectral components  is possible by solving (P1) with a probability exceeding $1-n_1^{-2}$. 
\end{theorem}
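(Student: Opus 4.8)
The plan is to recognize that the final statement is a direct citation of Chen and Chi's structured matrix completion result \cite{chen2014robust}, so the ``proof'' is really a matter of verifying that our setup satisfies the hypotheses of their theorem and then invoking it verbatim. First I would note that by Proposition~\ref{prp:duality} (and the sharper Vandermonde argument via \eqref{eq:van}), the weighted k-space data $\hat\yb = \hat\lb\odot\hat\fb$ generates a Hankel matrix $\hank(\hat\yb)\in\Cd^{(n_1-\kappa+1)\times\kappa}$ of rank at most $k$; this is exactly the ``spectrally sparse signal with $k$ modes'' hypothesis that drives the Enhanced Matrix Completion (EMaC) framework of \cite{chen2014robust}. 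I would then identify the relevant quantities: the ambient dimension is $n_1$, the number of spectral atoms is $k$, the Hankel matrix has aspect-ratio-dependent constant $c_s = \max\{n_1/\kappa,\ n_1/(n_1-\kappa+1)\}$ as in \eqref{eq:cs}, and $\mu_1$ is the incoherence parameter of \cite{chen2014robust} (controlling how the Vandermonde factors $L,R$ in \eqref{eq:L}--\eqref{eq:R} align with the canonical basis, essentially a separation condition on the $\{z_j\}$).

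Next I would state the sampling model precisely: the observed index set $\Omega\subset\{0,\dots,n_1-1\}$ is drawn uniformly at random with $|\Omega|=m$, and $P_\Omega$ restricts to those entries, which is exactly the Bernoulli/uniform sampling model under which \cite[Theorem~1]{chen2014robust} is proved. Under this correspondence, their theorem asserts that if $m > c_1\,\mu_1\,c_s\,k\,\log^4 n_1$ for an absolute constant $c_1$, then the nuclear-norm program (P1) recovers $\hank(\hat\yb)$ exactly — hence recovers every missing entry of $\hat\yb = \hat\lb\odot\hat\fb$ — with probability at least $1-n_1^{-2}$. Reading off the entries of the completed Hankel matrix gives the full weighted spectrum, and dividing by $\hat l(\omega)$ wherever $\hat l(\omega)\neq 0$ (with the finitely many spectral-null samples supplied directly, as discussed after \eqref{eq:EMaC2}) returns $\hat f(\omega)$. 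So the conclusion ``perfect recovery of the missing spectral components is possible by solving (P1)'' follows immediately.

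The only genuine content to check — and the step I expect to be the main obstacle — is that the \emph{dual certificate} construction in \cite{chen2014robust} goes through for our Hankel matrices. Their guarantee is stated for Hankel (or more generally multi-fold Hankel) matrices arising from a sum of $k$ complex exponentials with a well-defined Vandermonde decomposition $\hank=LDR^T$; our Proposition~\ref{prp:duality} covers a strictly larger class (e.g.\ streams of \emph{differentiated} Diracs, where no Vandermonde decomposition exists, per \cite{ye2015compressive}). For those degenerate cases one cannot literally quote \cite{chen2014robust}; the honest statement is that the theorem applies in the generic regime $n_1-\kappa+1\geq k+1$ where \eqref{eq:van} holds and the $z_j$ are distinct with controlled incoherence $\mu_1$, and the extension to the degenerate cases would require re-deriving the golfing-scheme certificate with the confluent-Vandermonde factors replacing $L$ and $R$. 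For the purposes of this paper it suffices to invoke \cite{chen2014robust} in the generic regime, absorbing the separation/incoherence requirements into $\mu_1$, which is precisely what the cited theorem allows; the remaining verification — matching the Hankel aspect ratio to $c_s$ and the sampling model to theirs — is routine.

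\begin{proof}
This is a restatement of \cite[Theorem~1]{chen2014robust} specialized to our setting. By the discussion around \eqref{eq:van}, in the regime $n_1-\kappa+1 \geq k+1$ the weighted-spectrum Hankel matrix admits the Vandermonde decomposition $\hank(\hat\yb) = L\,D\,R^T$ with $L,R$ full column rank, so $\rank\hank(\hat\yb)=k$ and $\hat\yb = \hat\lb\odot\hat\fb$ is the spectrum of a $k$-sparse sum of exponentials in the sense of \cite{chen2014robust}. The localization/coherence of the singular subspaces of $\hank(\hat\yb)$ — governed by the node locations $\{z_j\}$ in \eqref{eq:L}--\eqref{eq:R} — is captured by the incoherence parameter $\mu_1$ of \cite{chen2014robust}, and the Hankel aspect ratio enters through $c_s=\max\{n_1/\kappa,\ n_1/(n_1-\kappa+1)\}$ as in \eqref{eq:cs}. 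Since $\Omega$ is a uniformly random subset of $\{0,\dots,n_1-1\}$ with $|\Omega|=m$, the observation operator $P_\Omega$ matches the random sampling model of \cite{chen2014robust}. Their Theorem~1 then guarantees that, whenever $m > c_1 \mu_1 c_s k \log^4(n_1)$ for an absolute constant $c_1$, the program (P1) has $\hank(\hat\yb)$ as its unique optimum with probability at least $1-n_1^{-2}$; equivalently, every entry of $\hat\yb$ is recovered exactly. Dividing the recovered weighted spectrum by $\hat l(\omega)$ at the indices where $\hat l(\omega)\neq 0$, and using the directly sampled values at the spectral nulls of $\hat l$ as noted after \eqref{eq:EMaC2}, recovers $\hat f(\omega)$ on all of $\{0,\dots,n_1-1\}$. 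Q.E.D.
\end{proof}
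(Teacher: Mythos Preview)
Your proposal is correct and, if anything, does more work than the paper itself: the paper states this theorem purely as a citation of \cite{chen2014robust} with no accompanying proof or hypothesis-verification, whereas you explicitly check that the Vandermonde decomposition \eqref{eq:van}, the Hankel aspect-ratio constant $c_s$, and the uniform random sampling model all line up with the EMaC framework. Your caveat about the degenerate (non-Vandermonde) cases is also well-placed and consistent with the paper's own remarks following Proposition~\ref{prp:duality}.
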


Note that the annihilating filter size $\kappa$ corresponds to  the ``matrix pencil parameter''  \cite{hua1990matrix} in the spectral compressed sensing approach by Chen and Chi \cite{chen2014robust}, who
also showed that   the incoherence parameter $\mu_1$ in \eqref{eq:sno} grows to one as the annihilating filter size $\kappa$  increases \cite{chen2014robust}. 
Hence,  the sampling rate in \eqref{eq:sno} is nearly optimal since it is  proportional to the unknown sparsity level up to $\log^4(\cdot)$ factor.
This suggests an  important observation:  by reformulating the compressed sensing problem as a low rank Hankel structured matrix completion problem in the measurement domain,  no performance loss is expected.

\section{ALOHA for Accelerated MRI}
\label{sec:mri}

Inspired by the theoretical finding in the previous section, this section will explain two realizations of the low-rank matrix completion approaches  that are useful for MR applications. % our experiments.
The first one is a pyramidal decomposition algorithm that can be used for wavelet domain sparse signals, and the second one is a generalization
for multichannel parallel MRI. 
%Note that these are particular instances of the implementation, and general approaches to deal with, for example, redundant wavelets and so on, will be discussed
%later.

\subsection{Recovery of Wavelet Sparse Signals}
%Another important approach to address \eqref{eq:ssp} in ALOHA framework is using dyadic wavelet transforms. 
In fact, the signal $f$ in \eqref{eq:ssp} can be sparsified using wavelet transform as demonstrated in \cite{unser2014unified,unser2014unified2}.
Since the sparsity in wavelet domain have been the main interest in the existing compressed sensing theory, we are particularly interested in this analysis approach. 
Interestingly, the resulting ALOHA framework can have very unique pyramidal structure that  allows   computational efficient  and noise robust implementation of a low rank Hankel matrix completion algorithm.

First,   we assume that we have some real-valued ``L-compatible'' generalized wavelet \cite{unser2014unified,unser2014unified2} which, at a given scale  $s$,  are
defined as $\psi_s(x)$:
\begin{equation}\label{eq:psi}
\psi_s(x) = \mathrm{L}^* \phi_s(x) \ .
\end{equation}
Here, $\mathrm{L}^*$ is the adjoint operator of $\mathrm L$ and $\phi_s$ is a some smoothing kernel at $s$-scale. % with a good ``localisation'' property. 
In dyadic wavelet decomposition,  the $s$-scale wavelet is given by $\psi_s(x) = 2^{-s/2} \psi_o(x/2^s)$,
where $\psi_0(x)$ is often called the ``mother  wavelet'' \cite{mallat1999wavelet}.
%The localization property will be discussed later in more detail.
We further assume that the $\mathrm{L}$-spline model in \eqref{eq:ssp} is cardinal, i.e.
%We further assume that the wavelet is a cardinal wavelet, i.e.
 the knots positions $\{x_j\}_{j=0}^{k-1}$ of the sparse driving signal \eqref{eq:w} are  on integer grid.

Then, using \eqref{eq:psi}, the   wavelet transform of the signal $f(x)$  in \eqref{eq:ssp} at the $s$-scale is given by
\begin{eqnarray}
%f^s(x) &:=& 
 \langle f, \psi_s(\cdot - x) \rangle  % \notag\\
 &=& \langle f, \mathrm{L}^* \phi_s(\cdot - x) \rangle \nonumber\\
&=& \langle \mathrm{L} f, \phi_s(\cdot -x) \rangle \nonumber\\
&=& \langle w, \phi_s(\cdot-x) \rangle = (\overline{\phi}_s \ast w)(x) \nonumber\\
&=& \sum_{j=0}^{k-1} c_j \overline{\phi}_s(x-x_j) \label{eq:sample0}
\end{eqnarray}
where $\overline{\phi}_s(x) = \phi_s(-x)$ is the reversed version of $\phi_s$, and the last equality comes from the definition of $w$ in  \eqref{eq:w}.
For the case of ``wavelet sparse'' signals, their  discrete wavelet transform (DWT) coefficients, which corresponds to the discrete samples of \eqref{eq:sample0} with the unit sampling interval,  
 should be non-zero on  sparse index set. %  $\{x_j\}_{j=0}^{k-1}$ on the integer grid.
Hence, if we choose maximally localised smoothing function $ \overline{\phi}_s$, then 
 \eqref{eq:sample0} can be made as sparse as possible. 
 Ideally, if the $\overline{\phi}_0(x)$ has the support size of 1,  the sparsity of the $0$-th scale DWT coefficients is equal to $k$, which achieves the minimum sparsity level.
 In fact, this can be achieved 
using the triangle basis function \cite{unser2010introduction}: % \cite{unser2014unified,unser2014unified2}:%
\begin{eqnarray}\label{eq:p0}
\phi_0(x) =  \beta_+^1(2x), \quad  \mbox{where} \quad \beta_+^1(x) =  \left\{ \begin{array}{ll} x, &   0\leq x < 1  \\ 2-x, &  1 \leq x < 2 \\ 0, & \mbox{otherwise}\end{array} \right.
\end{eqnarray}
which has the support size of  1; hence, it retains the sparsity level of the DWT coefficient to be same as $k$, which corresponds to the
number of Diracs  in the underlying sparse driving signal $w(x)$.  
%,  as described in the following.
% Specifically,  
Furthermore,   for the case of TV signals  i.e. $\mathrm{L}=\frac{d}{dx}$, the corresponding wavelet transform  is indeed Haar wavelet, because
%$$\beta_+^1(x) = \left\{ \begin{array}{ll} x, &   0\leq x < 1  \\ 2-x, &  1 \leq x < 2 \\ 0, & \mbox{otherwise}\end{array} \right.$$
we have 
%Furthermore, when  $\mathrm{L}= \mathrm{\frac{\partial}{\partial x}}$, then this choice of the smoothing kernel produces the Haar wavelet \cite{unser2014unified}:
\begin{eqnarray}
\psi_0(x) = \mathrm{L}^*\phi_0(x) %\beta_+^1(2x) 
=  \left\{ \begin{array}{cl} 1, &  0 \leq  x <  \frac{1}{2}       \\  -1, &  \frac{1}{2}     \leq x <   1 \\  0, & \mbox{otherwise}\end{array} \right. \ .
\end{eqnarray} 
%Therefore, we use Haar wavelet transform throughout the paper by assuming the 

In general, the support size of $\overline{\phi}_s(x)$  with respect to $\overline{\phi}_0(x)$ in \eqref{eq:p0}
 increases to $2^s$, so the maximum sparsity level at the $s$ scale is 
up to $2^sk$. % as shown in Fig.~\ref{fig:wavelet}.
To maintain the same sparsity level across the scale, we therefore use the dyadic wavelet decomposition, whose  DWT coefficients are downsampled as %given by
\begin{equation}\label{eq:sample}
f^s[l] := \left.  \langle f, \psi_s(\cdot - x) 
%f^s(x) 
\rangle
\right|_{x=2^sl} . %\quad  l \in [0,\cdots, n_1/2^s -1] \  .
\end{equation} 
Due to the decimation by $2^s$,  it is easy to see that the resulting sparsity level is now  reduced to  $k$.
Then,  the corresponding  Fourier spectrum for the down-sampled signal is given by
\begin{eqnarray}
\hat f^s(e^{i\omega})  :=&\sum_{l=0}^k f^s[l]e^{-i\omega 2^s l} \label{eq:wdec}\\
 =&
\frac{1}{2^s} \sum_{n=0}^{2^s-1} \hat \psi_s^*\left(\omega+n\frac{2\pi}{2^s} \right) \hat f\left(\omega+n\frac{2\pi}{2^s} \right) \label{eq:wavelet_11} \\
=& \frac{1}{\sqrt{2^{s}}} \sum_{n=0}^{2^s-1}\hat \psi^*(2^s \omega+2\pi n) \hat f\left(\omega+n\frac{2\pi}{2^s} \right),   \notag
\end{eqnarray}
where we use the scale dependent wavelet property \cite{mallat1999wavelet}:
$$\hat \psi_s^*(\omega) = \sqrt{2^s}\hat \psi^*(2^s \omega )$$
 and the righthand side terms of \eqref{eq:wavelet_11} come from spectral copies due to the downsampling.

 \begin{figure}[!hbt]
\center{
\includegraphics[width=17cm]{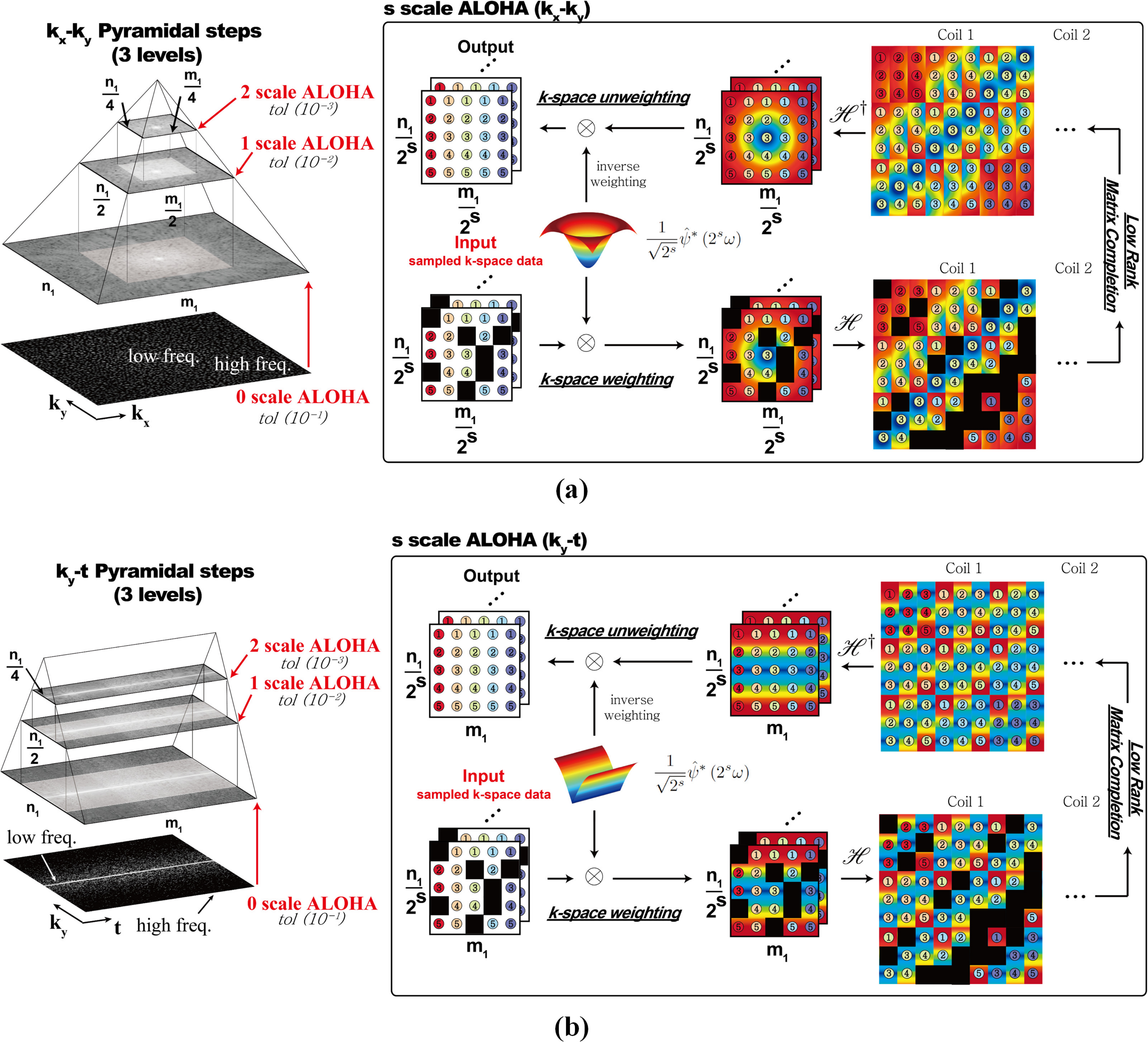}
}
\caption{ALOHA  implementation using pyramidal decomposition.  Construction of Hankel matrices from (a) $k_x$-$k_y$ data by assuming that
2-D dydadic wavelet transform of images is sparse, and (b) $k-t$  subsampled data by assuming that dynamic images can be sparsified using spatial wavelet and temporal
Fourier transform.  
In the box, each reconstruction unit
at the $s$-scale (the $s$-scale ALOHA) is illustrated, which consists of  five steps: (1) the k-space  region extraction,  (2) k-space element by element weighting using $s$-scale
wavelet spectrum  $\frac{1}{\sqrt{2^s}}\hat\psi(2^s\omega)$,  (3) low-rank Hankel matrix completion, (4) k-space unweighting by dividing the interpolated k-space data using the $s$-scale wavelet spectrum, and
 (5) the k-space data replacement using the interpolated data.
 Note that for the case of dynamic MRI, one dimensional weigting is required along the phase encoding direction, whereas 2-D weighting is necessary for
 the case of static imaging. 
 In the figure, $n_1$ denotes the ambient space dimension, and $\hank^\dag$ corresponds to the pseudo-inverse operation 
that takes the average value from the Hankel matrix and putting it back
to the original k-space domain.
The color coding in the Hankel structure matrix indicates the values of weighting.
% (1) The region corresponding to current decomposition level as shown in left figure of (a) is taken. (2) The missing k-space measurement is element-wisely multiplied with sparsifying transform. (3) Then, the matrix completion for Hankel structure is performed on weighted k-space. (4) After the Hankel matrix completion, the k-space measurements are returned to original domain by unweighting with previous multiplication kernel. (5) Conclusively, the reconstructed k-space is reinserted into ongoing k-space. This reconstruction unit is repeated along the level of pyramid. The size of region for reconstruction unit is getting smaller when the level of pyramid is increased. On the other hand, if we apply the separable k-space weighting (ex. different weighting along x and y axis - derivative along x and y), procedures (2)-(4) are repeated for each weighting kernel..
 }
\label{fig:flowchart_kxky}
\end{figure}

Because the number of nonzero coefficient $f^s[l]$ is at most $k$,  then the form of the righthand side equation \eqref{eq:wdec} is similar to \eqref{eq:spec}, so annihilating filters and the corresponding low-rank Hankel matrix can be found. 
However, one technical issue is that the righthand side of \eqref{eq:wavelet_11} has aliased copies of k-space data weighted by wavelet weighting. Therefore, the direct application of low rank matrix completion may not work.
Accordingly,  to deal with  $2\pi/2^s$ repeating spectral structure, we propose a {\em pyramidal} decomposition of the Hankel structured matrix. % used in \eqref{eq:EMaC2} thanks to the following observation.
More specifically, 
%notice the following decomposition of \eqref{eq:wavelet_11}:
%\begin{eqnarray}
%\hat f^s(e^{i\omega})  % - \frac{1}{\sqrt{2}} hat \psi^*(2 \omega+2\pi n) \hat f\left(\omega+n\frac{2\pi}{2^s} \right)
% =&
%\frac{1}{2} \left( \sum_{n=0}^{2^{s-1} -1}2 \hat \psi_s^*\left(\omega+n\frac{2\pi}{2^s} \right) \hat f\left(\omega+n\frac{2\pi}{2^s} \right) 
%+  \sum_{n=2^{s-1}}^{2^{s} -1}2 \hat \psi_s^*\left(\omega+n\frac{2\pi}{2^s} \right) \hat f\left(\omega+n\frac{2\pi}{2^s} \right)  \right)
%%\nonumber \\
%%=& \frac{1}{\sqrt{2^{s}}} \sum_{n=0}^{2^s-1} \hat \psi^*(2^s \omega+2\pi n) \hat f\left(\omega+n\frac{2\pi}{2^s} \right),   \label{eq:wavelet_1}
%\end{eqnarray}
%
if the spectral component of $\hat f(\omega)$ is estimated  for  $|\omega| \geq \pi/2^{s}$, then we can constitute a residual
signal recursively:
\begin{eqnarray}
\hat f_{r}(\omega) :=  \left\{ \begin{array}{ll} 0, &|\omega| \geq \pi/2^{s} \\  \hat f(\omega), & \mbox{otherwise} \end{array} \right.
\end{eqnarray}
Then,  for the corresponding discrete sample
$f_{r}^{s}[l]:=\left. \langle f_r, \psi_s(\cdot - x) \rangle\right|_{x=2^{s}l}$ 
with $f_{r} = \mathcal{F}^{-1}\{\hat f_{r} \}$, we have 
\begin{eqnarray}\label{eq:wdec3}
\hat f_r^{s}(e^{i\omega}):= 
 \sum_{l}   f_{r}^{s}[l]e^{-i\omega 2^{s}l}  
&=&
 \frac{1}{\sqrt{2^{s}}} \hat \psi^*(2^{s} \omega) \hat f (\omega),
\end{eqnarray}
because its high frequency content will  no more aliasing at a frequency inside $|\omega| \leq \pi/2^{s}$.
Then, due to the sparsity of $f_{r}^{s}[l]$, there exists a filter that annihilates $\hat \psi^*(2^{s} \omega) \hat f (\omega)$.
Consequently, the $s$-scale ALOHA problem can be formulated as
\begin{eqnarray}\label{eq:EMaC3}
&\min_{\mb\in \Cd^{n_1/2^s} } & \rank \hank(\mb)  \\
&\mbox{subject to } & P_\Omega(\mb) = P_\Omega ( \hat \lb^s \odot \hat \fb^s) \nonumber  \  .
\end{eqnarray}
where 
\begin{eqnarray}\label{eq:xs}
\hat f^s[n] =  \left. \hat f(\omega) \right|_{\omega = n \pi/n_1},  %?\quad  n= - \frac{n_1}{ 2^{s+1}},\cdots,  \frac{n_1}{2^{s+1}} -1 \ .
\end{eqnarray}
and 
 the weighting vector  $\hat\lb^s$ is composed of following spectral samples :
\begin{eqnarray}
\hat l^s[n] &= &\left. \hat l^s(\omega) \right|_{\omega = n \pi/n_1}  %\nonumber\\
= \left. \frac{1}{\sqrt{2^s}} \hat \psi^*\left(2^s \omega\right)  \right|_{\omega = n \pi/n_1}, \label{eq:ls}
\end{eqnarray}
where  $n= -{n_1}/{ 2^{s+1}},\cdots,  {n_1}/{2^{s+1}} -1$ and  the superscript $^*$ denotes the complex conjugate.
Note that sampling interval for $\hat\psi^*(\omega)$  widens by factor of 2 for each successive scale to construct the  k-space weighting vector $\hat \lb^s$.
 %although it is not the case for the data vector $\hat \fb^s$ 
% (see Fig.~\ref{fig:flowchart_kxky}).
Consequently, we have a  pyramidal decomposition as shown in Fig.~\ref{fig:flowchart_kxky}, and  the  ALOHA \eqref{eq:EMaC3} should be solved from the  lowest scale, i.e. $s=0$, up to the highest scale (see caption of   Fig.~\ref{fig:flowchart_kxky} for more discussion of the figures).
Because  sparsity can be imposed only on the wavelet coefficients,  the low frequency k-space data that correspond to the scaling function coefficients should be acquired additionally during MR data acquisition.
This information as well as the annihilating filter size then
 determines the the depth of the pyramidal decomposition as will be discussed  in detail later.

There are several advantages of using wavelet approaches compared to the direct operator weighting in ALOHA framework. First,  the pyramidal decomposition structure 
for low rank matrix completion can significantly reduce the overall  computational complexity. Moreover, it has been observed in \cite{unser2010introduction} that the wavelet approach is more
robust for noise and the model mismatch, which is also consistently observed in ALOHA framework as will be discussed later. 

Note that  the proposed pyramidal scheme is an algorithm that is designed to approximately decouple the aliasing components
in \eqref{eq:wavelet_11}, and  we do not claim the optimality of the proposed method.  
However, it is worth to mention that the optimality of a similar fine-to-coarse scale wavelet coefficients reconstruction was recently shown in
Fourier compressed sensing problem \cite{zhang2015reconstruction}.
%Indeed,  there might be a better   way
%to deal with this, which is by itself important and deserves for more indepth study. %be addressed in the future.

%We are aware that a two-level decomposition was used for low-rank MRI  \cite{haldar2011low}. However,  in  \cite{haldar2011low},
% the relationship to wavelet transform  was not clearly identified.

%\subsubsection{Operator weighting vs.  wavelets}

\subsection{Generalization to Parallel MRI} 	%%%%%%%%%%%%%%%%%%%%%%%%%%%%%%%%%%%%%%%%%%%%%%%%%%%%%%%%%%%%%%%%%%%%%%%%%%%%%%%%%%%%

Beside  the  annihilation property originating from  sparsity in the transform domain,  there exists an additional annihilation relationship that is unique in parallel MRI.
The relationship we described here has similarity to  SAKE and P-LORAKS when the image itself is sparse (even though these methods were inspired by the null space or GRAPPA type relationship), 
and  our contribution is its generalization to transform domain sparse signals.

Specifically, in pMRI, the unknown image $g_i(x)$ from the $i$-th coil  can be represented as
$$g_i(x) = s_i(x) f(x),\quad i=1,\cdots, r, $$
where $s_i(x)$ denotes the $i$-th coil sensitivity map, $f(x)$ is an unknown image, and $r$ denotes the number of coils.
%For simplicity, we further assume a 1-D total variation model so that
% that the differential operator \eqref{eq:L}  is $\mathrm{L}=\mathrm{D}$.
 Then, for TV signals  %to $g_i(x)$,
(i.e. $\mathrm{L}=\frac{d}{dx}$),  
 we have
\begin{eqnarray*}
\mathrm L g_i(x) %& =& \sum_{j=0}^K a_j \partial^j (s_i(x) f(x))) \\
%&=&  \sum_{j=0}^K a_j  \sum_{p+q=j} \frac{j!}{p!q!}\partial^p s_i(x) \partial^q f(x) \\
&=&  s_i(x)\mathrm Lf(x) +   f(x) \mathrm{L} s_i(x) \ , %\sum_{j=1}^K a_j  \sum_{p+q=j, p\geq 1} \frac{j!}{p!q!} \partial^p s_i(x) \partial^q f(x)  \ ,
\end{eqnarray*}
whose Fourier transform is given by
\begin{eqnarray}\label{eq:lg}
\hat l(\omega) \hat g_i(\omega) %& =& \sum_{j=0}^K a_j \partial^j (s_i(x) f(x))) \\
%&=&  \sum_{j=0}^K a_j  \sum_{p+q=j} \frac{j!}{p!q!}\partial^p s_i(x) \partial^q f(x) \\
&=&  \hat s_i(\omega) \ast (\hat l(\omega) \hat f(\omega) )+  \hat f(\omega) \ast (\hat l(\omega) \hat s_i(\omega)) \ . %\sum_{j=1}^K a_j  \sum_{p+q=j, p\geq 1} \frac{j!}{p!q!} \partial^p s_i(x) \partial^q f(x)  \ ,
\end{eqnarray}
Then, we can show the following inter-coil annihilating filter relationship:
\begin{lemma}
Suppose that $\hat h(\omega)$ is the annihilating filter for $\hat f(\omega)$ such that $(\hat h\ast \hat f)(\omega) =0$. Then, we have
\begin{eqnarray}\label{eq:an_mat2}
 \hat h(\omega)\ast \hat s_j(\omega) \ast \left(\hat l(\omega)\hat g_i(\omega)\right) - \hat h(\omega)\ast \hat s_i(\omega) \ast  \left(\hat l(\omega)\hat g_j(\omega)\right) = 0, \quad  i\neq j. %~ \forall \omega.
\end{eqnarray}
\end{lemma}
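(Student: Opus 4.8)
The plan is to argue entirely in the Fourier domain, using only the product-rule identity \eqref{eq:lg} together with two elementary facts about convolution: it is commutative and associative, and an annihilation relation is inherited by any further convolution, i.e. $\hat h\ast\hat f=0$ implies $\hat h\ast\hat f\ast\hat q=0$ for \emph{any} spectrum $\hat q$ (equivalently, $h(x)f(x)\equiv 0$ in the image domain forces $h(x)f(x)q(x)\equiv 0$). For brevity I would write $\hat y:=\hat l\hat f$ for the weighted image spectrum and $\hat u_i:=\hat l\hat s_i$ for the weighted sensitivity spectrum, so that \eqref{eq:lg} reads $\hat l\hat g_i=\hat s_i\ast\hat y+\hat f\ast\hat u_i$.

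First I would convolve the identity for coil $i$ with $\hat h\ast\hat s_j$, which gives
\[
\hat h\ast\hat s_j\ast(\hat l\hat g_i)=\hat h\ast\hat s_j\ast\hat s_i\ast\hat y+\hat h\ast\hat s_j\ast\hat f\ast\hat u_i .
\]
The second summand carries the factor $\hat h\ast\hat f=0$ and therefore vanishes, while rearranging the first summand by commutativity leaves $\hat h\ast\hat s_j\ast(\hat l\hat g_i)=\hat h\ast\hat s_i\ast\hat s_j\ast\hat y$. Running the identical manipulation with $i$ and $j$ interchanged yields $\hat h\ast\hat s_i\ast(\hat l\hat g_j)=\hat h\ast\hat s_i\ast\hat s_j\ast\hat y$ as well, so subtracting the two identities cancels the common term and produces exactly \eqref{eq:an_mat2}. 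For the discrete k-space actually used to build the Hankel matrices I would add the routine remark that the same steps go through once the spectra are sampled on the Nyquist grid and the convolutions are read as the finite (periodic) convolutions of Section~\ref{sec:theory}; note also that \eqref{eq:lg} is itself just the Fourier transform of the Leibniz rule $\mathrm L(s_if)=s_i\,\mathrm Lf+f\,\mathrm Ls_i$, which is what restricts this particular derivation to the first-order (TV) case $\mathrm L=\frac{d}{dx}$.

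The computation is short, so the point I would flag as the crux is conceptual rather than technical: it is the observation that the cross terms $\hat f\ast\hat u_i$ — which encode the derivative of the (generally smooth, hence non-sparse) coil sensitivity and thus cannot be annihilated on their own — are nonetheless eliminated as soon as they are convolved with $\hat h$, precisely because $\hat h\ast\hat f=0$ propagates through the extra convolution. Once those terms are gone, what remains is symmetric under $i\leftrightarrow j$ and cancels purely by commutativity of convolution, with no assumption whatsoever on $s_i$; this is also what makes the resulting relation calibration-less, since it never requires knowing or estimating the sensitivity maps.
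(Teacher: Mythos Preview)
Your proof is correct and follows essentially the same route as the paper: expand $\hat l\hat g_i$ via the Leibniz identity \eqref{eq:lg}, convolve with $\hat h\ast\hat s_j$, kill the cross term using $\hat h\ast\hat f=0$, and observe that the surviving term $\hat h\ast\hat s_i\ast\hat s_j\ast(\hat l\hat f)$ is symmetric in $i,j$. Your added remarks on why the cross term disappears and on the calibration-less nature of the relation are helpful commentary but do not change the argument.
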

\begin{proof}
Using \eqref{eq:lg}, we have
\begin{eqnarray}
\hat h(\omega)\ast \hat s_j(\omega) \ast \left(\hat l(\omega)\hat g_i(\omega)\right)  &=& \hat h(\omega)\ast \hat s_j(\omega) \ast \hat s_i(\omega) \ast (\hat l(\omega) \hat f(\omega) )+ \hat h(\omega)\ast \hat s_j(\omega) \ast \hat f(\omega) \ast (\hat l(\omega) \hat s_i(\omega))  \notag \\
&=&  \hat h(\omega)\ast \hat s_j(\omega) \ast \hat s_i(\omega) \ast (\hat l(\omega) \hat f(\omega) ) \label{eq:tv1}
\end{eqnarray}
where the second term vanishes thanks to  $(\hat h\ast \hat f)(\omega) =0$.  Similarly, we have
\begin{eqnarray}
\hat h(\omega)\ast \hat s_i(\omega) \ast \left(\hat l(\omega)\hat g_j(\omega)\right)  &=&  \hat h(\omega)\ast \hat s_i(\omega) \ast \hat s_j(\omega) \ast (\hat l(\omega) \hat f(\omega) ) \label{eq:tv2}
\end{eqnarray}
Because Eqs.~\eqref{eq:tv1} and \eqref{eq:tv2} are identical, Eq.~\eqref{eq:an_mat2} can be shown.  %This concludes the proof.  %
Q.E.D.
\end{proof}
Therefore,    to exploit the inter-coil annihilating property,  Hankel matrix from each channel is stacked side by side to form a matrix $\Yc$:
\begin{eqnarray}\label{eq:Yc}
\Yc = \begin{bmatrix}  \hank(\hat \lb \odot \hat \gb_1) & \cdots &  \hank(\hat \lb \odot \hat \gb_r) \end{bmatrix}   \in \Cd^{(n_1-\kappa+1)\times \kappa r}\  ,
\end{eqnarray}
Here, the augmented matrix $\Yc$ in \eqref{eq:Yc} has the following rank condition:
 \begin{proposition}\label{prp:prank}
 For the concatenated Hankel matrix given in \eqref{eq:Yc}, we have 
 \begin{eqnarray}
\mathrm{rank}~ \Yc  &\leq & \frac{r(2k-r+1)}{2}  \ .
\end{eqnarray}
where $k$ and $r$ denotes the sparsity and the number of coils, respectively.
 \end{proposition}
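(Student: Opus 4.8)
The plan is to regard the column space of $\Yc$ as the sum $V_1+\cdots+V_r$, where $V_i:=\mathrm{range}\,\hank(\hat\lb\odot\hat\gb_i)$, and to bound its dimension by combining a per-channel rank estimate with the $\binom{r}{2}$ inter-coil relations supplied by the preceding Lemma. For the per-channel estimate I would aim at $\dim V_i\le k$: since the weighted $i$-th channel spectrum $\hat\lb\odot\hat\gb_i$ is, like $\hat\lb\odot\hat\fb$, built from $k$ exponential atoms (multiplication by the smooth coil profile $\hat s_i$ reshuffles the coefficients but not the sparsity level), Proposition~\ref{prp:duality} applied channel by channel gives $\rank\hank(\hat\lb\odot\hat\gb_i)\le k$.

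Next I would turn the Lemma into a column-space overlap. For $i<j$, writing $\mathbf u_{ij}$ (resp.\ $\mathbf u_{ji}$) for the order-reversed, zero-padded coefficient vector of $\hat h\ast\hat s_j$ (resp.\ $\hat h\ast\hat s_i$), the identity $\hat h\ast\hat s_j\ast(\hat l\hat g_i)=\hat h\ast\hat s_i\ast(\hat l\hat g_j)$ becomes the matrix relation
\[
\hank(\hat\lb\odot\hat\gb_i)\,\mathbf u_{ij}\;=\;\hank(\hat\lb\odot\hat\gb_j)\,\mathbf u_{ji}\;=:\;\mathbf w_{ij}\neq\mathbf 0 ,
\]
so $\mathbf w_{ij}\in V_i\cap V_j$. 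Hence, for each fixed $j\ge 2$, the $j-1$ vectors $\mathbf w_{1j},\dots,\mathbf w_{j-1,j}$ all lie in $V_j\cap(V_1+\cdots+V_{j-1})$.

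The count then follows from the elementary identity $\dim\!\big(\sum_{i=1}^r V_i\big)=\sum_{i=1}^r\dim V_i-\sum_{j=2}^r\dim\!\big(V_j\cap\sum_{i<j}V_i\big)$. Inserting $\dim V_i\le k$ and $\dim\!\big(V_j\cap\sum_{i<j}V_i\big)\ge j-1$ yields
\[
\rank\Yc\;\le\;rk-\sum_{j=2}^r(j-1)\;=\;rk-\binom{r}{2}\;=\;\frac{r(2k-r+1)}{2},
\]
which is the claim.

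The hard part is the inequality $\dim\!\big(V_j\cap\sum_{i<j}V_i\big)\ge j-1$, i.e.\ the linear independence of $\mathbf w_{1j},\dots,\mathbf w_{j-1,j}$ (this also subsumes $\mathbf w_{ij}\neq\mathbf 0$). Since $\mathbf w_{ij}$ is the sampled version of $\hat h\ast\hat s_i\ast\hat s_j\ast(\hat\lb\odot\hat\fb)$, independence amounts to a non-degeneracy property of the coil sensitivity maps --- roughly, that the k-space coil profiles $\{\hat s_i\}$ are linearly independent and do not share roots with $\hat h$ --- which I would either impose as a hypothesis or argue holds generically. A secondary, more routine point is justifying $\dim V_i\le k$ rigorously in the transform-sparse ($\mathrm L$-spline, including TV) setting, where pointwise weighting by $\hat l$ does not commute with convolution by $\hat h$; for image-domain sparse models this is immediate, but the general case needs the smoothness of $s_i$ to be absorbed into the annihilating-filter length $\kappa$.
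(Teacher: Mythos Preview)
Your approach is correct and amounts to the column-space dual of the paper's null-space argument. The paper assembles the $\binom{r}{2}$ inter-coil relations into an explicit block matrix $\Sc_1$ whose $(i,j)$-column carries (the reversal of) $\hat h\ast\hat s_j$ in block $i$ and $-\hat h\ast\hat s_i$ in block $j$, asserts $\Yc\Sc_1=\zerob$ with $\rank\Sc_1=\binom r2$, and then combines this with the per-block bound $\rank\hank(\hat\lb\odot\hat\gb_i)\le k$ from Proposition~\ref{prp:duality} to reach $\rank\Yc\le rk-\binom r2$. Your overlap vector $\mathbf w_{ij}$ is precisely the common image $\hank(\hat\lb\odot\hat\gb_i)\,\mathbf u_{ij}=\hank(\hat\lb\odot\hat\gb_j)\,\mathbf u_{ji}$, so where the paper records a null direction of $\Yc$ you record a redundancy in the sum $V_1+\cdots+V_r$; the two counts are equivalent.

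What your route buys is transparency about the missing hypotheses. The paper states $\rank\Sc_1=\binom r2$ without proof and, more to the point, passes directly from ``per-block rank $\le k$'' together with ``$\dim\Null(\Yc)\ge\binom r2$'' to $\rank\Yc\le rk-\binom r2$; this is not rank--nullity (that would only give $\kappa r-\binom r2$) and tacitly requires the inter-coil null vectors to be independent of the per-channel ones. Your inclusion--exclusion formulation isolates the exact condition needed --- linear independence of $\{\mathbf w_{1j},\dots,\mathbf w_{j-1,j}\}$ for each $j$ --- and you correctly flag it as a generic non-degeneracy assumption on the coil profiles. The secondary concern you raise about justifying $\dim V_i\le k$ in the transform-sparse (TV/$\mathrm L$-spline) setting is likewise glossed over in the paper, which simply invokes Proposition~\ref{prp:duality} channel by channel.
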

 \begin{proof}
Since 
Eq.~\eqref{eq:an_mat2} holds for every pair among $r$-channels,  it is easy to show that
$$\Yc \Sc_1 =\zerob,$$ 
where $\Sc_1$ is defined recursively as follows:
\begin{eqnarray}
  \Sc_{r-1} &\triangleq& \left[
                  \begin{array}{cc}
                    \hat\vb_r \\    - \hat\vb_{r-1}  \\
                  \end{array}
                \right]
   \\
  \Sc_t &\triangleq&\left[
                      \begin{array}{cccc|c}
                       \hat\vb_{t+1} & \hat\vb_{t+2} & \cdots & \hat\vb_{C}  & \zerob \\ \hline
                       -\hat\vb_t &  & && \\
                         & -\hat\vb_t &  &  & \Sc_{t+1}  \\
                          &  & \ddots &  \\
                         &  &  &   -\hat\vb_t  & 
                      \end{array}
                    \right]  \ ,
 \end{eqnarray}
 wbere  $\hat \vb_i$ denotes the discrete samples of $\hat h(\omega) \ast \hat s_i(\omega)$.
Accordingly,
$$ \dim \Null(\Yc)\geq \mathrm{rank}(\Sc_1) = \binom{r}{2} = {r(r-1)}/{2}. $$ 
Furthermore, Proposition~\ref{prp:duality} informs us that 
$ \hank(\hat \lb \odot \hat \gb_i)$ in $\Yc$ has rank at most $k$. 
Therefore, we have
\begin{eqnarray}
\mathrm{rank}~ \Yc  &\leq&  k r -  \frac{r(r-1)}{2} 
=
\frac{r(2k-r+1)}{2}  \ .
\end{eqnarray}
%This concludes the proof.
\end{proof}

Due to the low-rankness of the concatenated matrix,  
 the multichannel version of the ALOHA can be formulated as
\begin{eqnarray}\label{eq:EMaC4}
 &\min_{ \{\mb_i\}_{i=1}^r } & \rank \begin{bmatrix}  \hank( \mb_1) & \cdots &  \hank(\mb_r) \end{bmatrix}  \\
&\mbox{subject to } & P_\Omega(\mb_i) = P_\Omega(\hat \lb \odot \hat \gb_i) ,\quad i=1,\cdots, r \nonumber  \  .
\end{eqnarray}

%Before we finish with the discussion of ALOHA for  parallel MRI,  
In addition, it is worth to emphasize the meaning of the rank condition of the concatenated matrix.
 Because  the rank of $\Yc$ is  at most ${r(2k-r+1)}/{2}$ and there exist
additional degrees of freedom that belong to  the amplitudes of  Diracs in the transform domain, 
the total degrees of the freedom for the FRI signal are at most $r(2k-r+1)$  \cite{vetterli2002sampling,dragotti2007sampling,maravic2005sampling}.
In parallel MRI,  if $m$ denotes the number of k-space sampling locations,
then k-space data are sampled simultaneously from 
 $r$-coils and
the total number of  k-space samples becomes $mr$. Since the number of the samples should be larger than the degree of the freedom, we have
\begin{eqnarray}\label{eq:bound}
mr \geq {r(2k-r+1)}  & \Longleftrightarrow & k \leq  \frac{m+r-1}{2} \ .
\end{eqnarray}
This result has very important geometric meaning.
Suppose that we attempt to address parallel imaging by exploiting the joint sparsity. Using the notation in \eqref{eq:sense}, this could be addressed by the following multiple measurement vector (MMV) problem
\begin{eqnarray}
\min_F &\|F\|_0 \nonumber\\
\mbox{subject to} & G=AF
\end{eqnarray}
where $G=[\gb_1,\cdots,\gb_r]$ and $F=[\fb_1,\cdots, \fb_r]$ with $\fb_i=[S_i] \fb$ for a given sensitivity map $[S_i]$,
and $\|\cdot\|_0$ denotes the number of non-zero rows.
Then,  it was shown in our previous work  \cite{kim2012compressive} and others \cite{lee2012subspace,davies2012rank} that if $F_*$ satisfies $AF_*=G$ and 
\begin{eqnarray}\label{eq:mmv}
\|F_*\|_0 <  \frac{\mathrm{spark}(A)+ \rank(G) -1 }{2} 
\end{eqnarray}
then $F_*$ is the unique solution for $G=AF$, where $\mathrm{spark}(A)$ denote the smallest number of linearly dependent columns of $A$.
For randomly chosen Fourier samples, we have $\mathrm{spark}(A)\geq m+1$. Moreover,  $ \rank(G) =r$. Hence,
\eqref{eq:mmv} is equivalent to \eqref{eq:bound} if the unknown sparsity level is $k$, i.e. $\|F_*\|_0=k$.

Therefore, similar to the fact that single coil ALOHA does not lose any theoretical optimality compared to the single coil compressed sensing approaches,
our ALOHA formulation for  pMRI using 
 \eqref{eq:EMaC4} may fully exploit the multi-channel diversity from parallel acquisition.
Moreover, because the k-space interpolation formulation of ALOHA is derived based on the information of the singularity of the signals, the resulting reconstruction
results for both single and multi-channel formulation exhibit superior performance as will be discussed in  experimental sections.

\section{Implementation Details}
\label{sec:method}

\subsection{2D Hankel Structured Matrix Construction}

\begin{figure}[!hbt]
\centering
\includegraphics[trim = 0mm 0mm 0mm 0mm,clip=true,width=17cm]{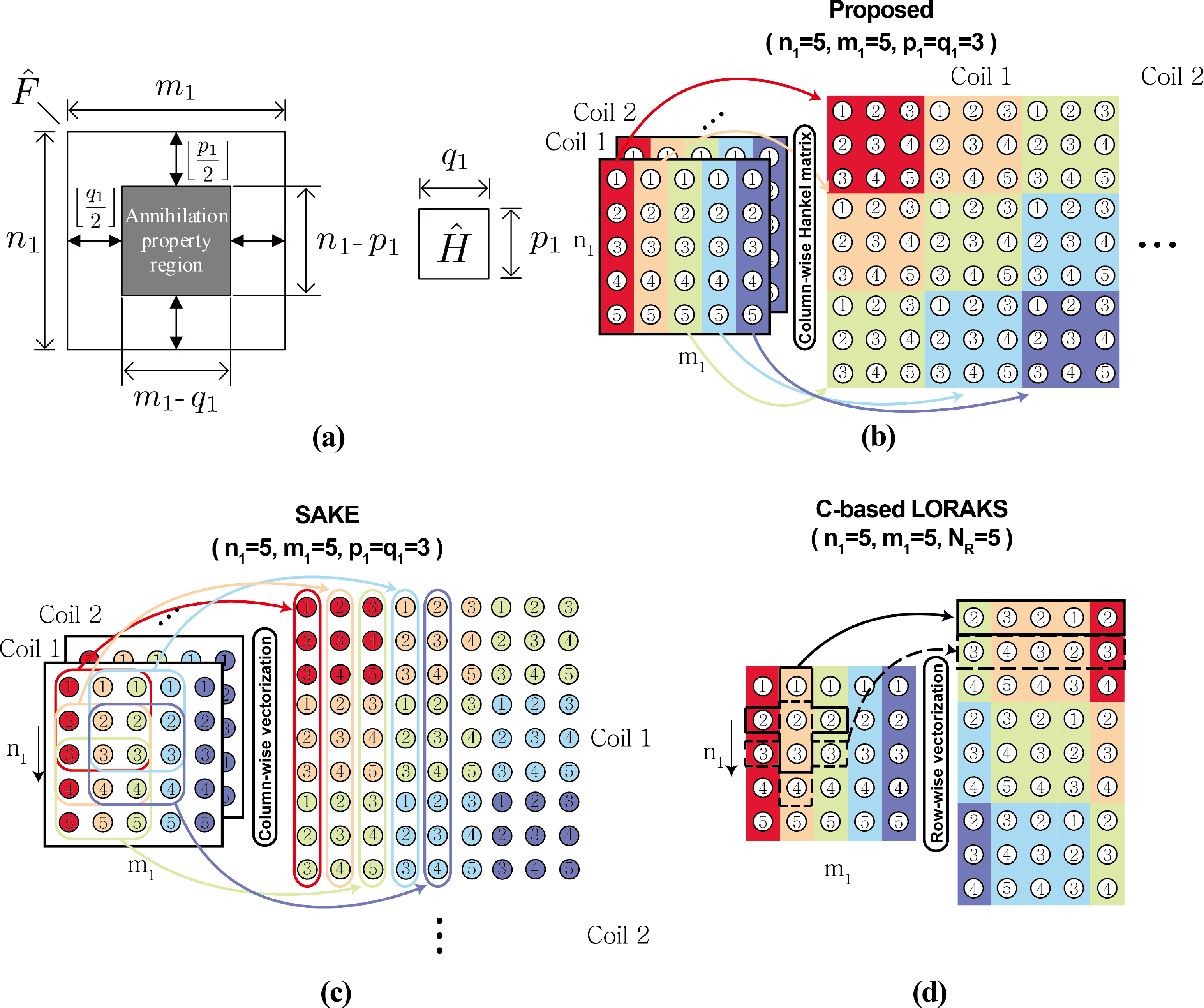}
\caption{(a) Area where annihilation property holds. Various
ways of constructing  block Hankel matrices: (b) ALOHA, (c) SAKE, and (d) LORAKS. In (d), $N_R$ denotes
 the number of neighborhood pixels.}\label{fig:hankel_mat}
\end{figure}

In 3D imaging or dynamic acquisition of MR data, the readout direction is usually fully sampled and the other two  encoding directions are 
under sampled. Thus, this section presents an explicit way of constructing a 2D Hankel structured matrix.
Specifically, if $\hat h[n,m]$ is a $p_1\times q_1$ size 2D annihilating filter,
then the corresponding annihilating filter 
 relation is given by
\begin{eqnarray}\label{eq:zero2}
(\hat h\ast \hat f)[n,m] =  \sum_{i=0}^{p_1-1}\sum_{j=0}^{q_1-1} \hat h[i,j]\hat f[n-i, m-j]  = 0, %,  \quad \forall n, m \in \Omega.
\end{eqnarray}
for all $n, m \in \Omega$.
Let $n_1 \times m_1$ k-space data matrix be defined by
\begin{eqnarray*}
\hat F &:=& \begin{bmatrix}\hat f[0,0] & \cdots & \hat f[0,m_1-1] \\ \vdots & \ddots & \vdots \\ \hat f[n_1-1,0] & \cdots & \hat f[n_1-1, m_1-1] \end{bmatrix}  \\
 &=&
\begin{bmatrix} \hat\fb_0 & \cdots  &\hat \fb_{m_1-1} \end{bmatrix}
%\hat H = \begin{bmatrix}\hat h[0,0] & \cdots & \hat f[0,q-1] \\ \vdots & \ddots & \vdots \\ \hat f[q-1,0] & \cdots & \hat f[p-1, q-1] \end{bmatrix} 
\end{eqnarray*}
Similarly, we define $p_1\times q_1$ annihilating filter matrix $\hat H$.
Then, by removing the boundary effect from the 2D convolution as shown in Fig.~\ref{fig:hankel_mat}(a),
the 2D annihilation property should hold only inside of the domain and
 \eqref{eq:zero2}   can be equivalently represented as
\begin{eqnarray}\label{eq:an_final}
%    \label{eq:3D-chop-convmtx}
% \conv\{\Hb\} \textsc{vec}(\Yb)
%  &=&
 \hank(\hat F)  \hat \hbk = \mathbf{0},
    \end{eqnarray}
where a 2-D Hankel structured matrix $  \hank(\hat F) $ is constructed as
\begin{eqnarray}\label{eq:3dhank}
  %\hank(\hat F)= &   
   \!\left[\!
        \begin{array}{cccc}
            \!
        \hank(\hat \fb_0)&    \hank(\hat \fb_1) &   \cdots &  \hank(\hat \fb_{q_1-1})\\
           \hank(\hat \fb_1)&    \hank(\hat \fb_2) &   \cdots &  \hank(\hat \fb_{q_1})\\
     %   \0 &    \cdots & \Gb_1^{(k)} & \Gb_2^{(k)} \\
       \vdots &  \vdots  & \ddots  & \vdots \\
   \hank(\hat \fb_{m_1-q_1} )&    \hank(\hat \fb_{m_1-q_1+1}) &   \cdots &  \hank(\hat \fb_{m_1-1})\\
                     \end{array}\!
              \right]
    \end{eqnarray}
with $\hank(\hat \fb_j)   \in \Cd^{(n_1-p_1+1)\times p_1} $  given by
   \begin{eqnarray*}%\label{eq:X2}
%\hank(\hat \fb_j) &=&
 \left[
        \begin{array}{cccc}
      \hat f[0,j]  & \hat f[1,j] & \cdots   & \hat f[p_1-1,j]   \\%  &\cdots&   \gb_i(1)\\
     \hat f[1,j]  & \hat f[2,j] & \cdots &   \hat f[p_1,j] \\
         \vdots    & \vdots     &  \ddots    & \vdots    \\
      \hat f[n_1-p_1,j]  & \hat f[n_1-p_1+1,j] & \cdots & \hat f[n_1-1,j]\\
        \end{array}
    \right] \ ,
    \end{eqnarray*}
    and the annihilating filter vector is given by
    \begin{eqnarray}
    \hat\hbk =\overline{\vect(\hat H)} \ ,
    \end{eqnarray}
    where the overline denotes an operator that reserves the order of a vector. %the order-reversal operation of a vector.
    Using this, we can construct an augmented matrix $\Yc$ in \eqref{eq:Yc} from $r$-channels.
    
The augmented matrix structure   $\hank(\hat F)$ illustrated in Fig.~\ref{fig:hankel_mat}(b)  is similar to those of SAKE and LORAKS/P-LORAKS  in Fig.~\ref{fig:hankel_mat}(c) and (d), respectively, with the following differences. 
Compared to SAKE,   ALOHA stacks the multi-coil Hankel matrices side by side.  Unlike the SAKE and ALOHA,  LORAKS uses 
$R=4$ ($R$: neighborhood size) neighbors according to the software manual provided by the original
authors.
%However, these are negligible differences since the construction by SAKE is just a transpose of ALOHA and
%LORAKS could use different set of neighbours. Therefore,
Moreover, the most important novelty of the proposed method is the k-space weighting to construct  the Hankel structured matrix.

\subsection{Hankel structured matrix completion algorithm} \label{sec:reconstruction} 	%%%%%%%%%%%%%%%%%%%%%%%%%%%%%%%%%%%%%%%%%%%%%%%%%%%%%%%%%%%%

In order to solve  Eqs.  \eqref{eq:EMaC2} and \eqref{eq:EMaC4},
we employ an SVD-free structured rank minimization algorithm \cite{signoretto2013svd} with an initialization using the
 low-rank factorization model (LMaFit) algorithm \cite{wen2012solving}.  This algorithm does not use the singular value decomposition (SVD), so the computational complexity can be significantly reduced. 
 %Similar approaches have been studied in previous researches\cite{haldar2010spatiotemporal,lingala2013blind}. 
 %The difference from \cite{haldar2010spatiotemporal,lingala2013blind} is the relaxation of nuclear norm into separate Frobenius norm of $U$ and $\Vb$ in our approach.
Specifically, the algorithm is based on the following observation \cite{srebro2004learning}:
 \begin{eqnarray}\label{eq:relaxation_nuclear}
 \|A\|_* = \min\limits_{U,V: A=UV^H} \|U\|_F^2+ \|V\|_F^2 \quad \ .
 \end{eqnarray}
Hence,   \eqref{eq:EMaC2} can be reformulated as %for a low-rank matrix $\Zbc$ with measured elements at $M(i,j)$, LMaFit solves the following optimization problem:
 the nuclear norm minimization problem under  the matrix factorization constraint:
\begin{eqnarray}
\min_{U,V:\hank(\mb)=UV^H}  && \|U\|_F^2+ \|V\|_F^2 \nonumber \\
\mbox{subject to}%&& \Yc = \hank(\hat \mb)\label{eq:fac}\\
&&P_\Omega(\mb)=  P_\Omega(\hat \fb)
 . \label{eq:data}
\end{eqnarray}
%The first constraint \eqref{eq:fac} can be handled using the alternating direction method of multiplier  (ADMM) \cite{boyd2011distributed} which will be explained soon. 
%The second constraint  \eqref{eq:data} is a set constraint imposing that $\Yc$ lies in the set $S$:
%\begin{eqnarray}
%S = \{  P_\Omega(\hat\mb)=  P_\Omega(\hat \fb) \} \ .
%\end{eqnarray}
By combining the two constraints, we have the following cost function for   an alternating direction method of multiplier  (ADMM) step \cite{boyd2011distributed}:
\begin{eqnarray}\label{eq:ADMM}
L(U,V,\mb,\Lambda) & := & \iota(\mb) + \frac{1}{2} \left( \|U\|_F^2+\|V\|_F^2\right)  \nonumber\\
&&+  \frac{\mu}{2}  \|\hank(\mb)- UV^H+\Lambda\|^2_F
\end{eqnarray}
where $\iota(\mb)$  denotes an indicator function:
$$\iota(\mb) = \left\{\begin{array}{ll} 0, & \mbox{if $P_\Omega(\mb)=  P_\Omega(\hat \fb)$} \\ \infty, & \mbox{otherwise} \end{array} \right.  \  .$$
One of the advantages of the ADMM formulation is that each subproblem is simply obtained from \eqref{eq:ADMM}. More specifically,
 $\mb^{(n+1)},  U^{(n+1)}$ and  $V^{(n+1)}$ can be obtained, respectively, by applying the following optimization problems sequentially:
 \begin{equation}\label{eq:update}
 \begin{array}{l}
 %\mb^{(n+1)} =& \arg
 \min_\mb  \iota(\mb)  +  \frac{\mu}{2}  \|\hank(\mb)- U^{(n)}V^{(n)H}+\Lambda^{(n)}\|^2_F \\
 %U^{(n+1)} =& \arg
 \min_U  \frac{1}{2} \|U\|_F^2 + \frac{\mu}{2}  \|\hank(\mb^{(n+1)})- UV^{(n)H}+\Lambda^{(n)}\|^2_F \\ 
  %V^{(n+1)} =&\arg
  \min_V  \frac{1}{2} \|V\|_F^2 + \frac{\mu}{2}  \|\hank(\mb^{(n+1)})- U^{(n+1)}V^{H}+\Lambda^{(n)}\|^2_F 
  \end{array}
  \end{equation}
  and the Lagrangian update is given by
  \begin{eqnarray*}
  \Lambda^{(n+1)} =& \Yc^{(n+1)}- U^{(n+1)}V^{(n+1)H}+\Lambda^{(n)} \ .
 \end{eqnarray*}
 It is easy to show that the first step in \eqref{eq:update} can be reduced to
 \begin{eqnarray}
 \mb^{(n+1)}  = 
 P_{\Omega^c}\hank^{\dag}\left\{  U^{(n)}V^{(n)H}- \Lambda^{(n)} \right\} + P_\Omega (\hat \fb) ,
 \end{eqnarray}
 where $P_{\Omega^c}$ is a projection mapping on the set $\Omega^c$ and
  $\hank^{\dag}$ corresponds to the Penrose-Moore pseudo-inverse mapping from our block Hankel structure to a vector.
%%  , which is calculated as
%% \begin{equation}
%% \hank_p^{\dag} = \left(\hank_p^*\hank_p\right)^{-1}\hank_p^* \ .
%% \end{equation}
%Note that the adjoint operator $\hank^*\{\Ab\}$  adds multiple elements of $\Ab$ from $C$-coils and put it back to the original coordinate,
%and  $\left(\hank_p^*\hank_p\right)^{-1}$ denotes the division by the number of multiple correspondence; 
Hence, the role of the pseudo-inverse is taking the average value and putting it back
to the original coordinate. 
Next, the subproblem for $U$ and $V$ can be easily calculated by taking the derivative with respect to each matrix, and we have
  \begin{equation}\label{eq:UV}
  \begin{array}{l}
 U^{(n+1)} = \mu \left(\Yc^{(n+1)}+\Lambda^{(n)}\right)V^{(n)} \left(I+\mu V^{(n)H}V^{(n)}\right)^{-1}    \\
  V^{(n+1)} = \mu\left(\Yc^{(n+1)}+\Lambda^{(n)}\right)^HU^{(n+1)} \left(I+\mu U^{(n+1)H}U^{(n+1)}\right)^{-1}   
 \end{array}
 \end{equation}
 Note that the computational complexity of our ADMM algorithm is dependent on the matrix inversion in \eqref{eq:UV},
 whose complexity is determined by the estimated rank of the Hankel matrix. Therefore, even though the Hankel matrix has large size,
 the estimated rank is much smaller, which significantly reduces overall complexity.

  Now, for faster convergence, the remaining issue is how to initialize $U$ and $V$. For this, we employ an algorithm called the low-rank factorization model (LMaFit) \cite{wen2012solving}.
 More specifically, for a low-rank matrix $Z$, LMaFit solves the following optimization problem:
\begin{equation}\label{eq:lmafit}
\min_{U,V,Z} \frac{1}{2} \|UV^H-Z\|^2_F~  \mbox{subject to } P_I(Z) = P_I(\hank(\hat \fb))
\end{equation}
and $Z$ is initialized with $\hank(\hat \fb)$ and the index set $I$ denotes the positions where the elements of $\hank(\hat \fb)$ are known.
LMaFit solves a linear equation with respect to  $U$ and $V$ to find their updates and relaxes the updates by taking the average between the previous iteration and the current iteration.
Moreover, the rank estimation can be done automatically. LMaFit uses QR factorization instead of SVD, so it is also computationally efficient. 
%A similar matrix factorisation algorithm was proposed in \cite{haldar2010spatiotemporal,lingala2013blind}.
Even though the problem  \eqref{eq:lmafit} is non-convex due to the multiplication of $U$ and $V$,
the convergence of LMaFit to a stationary point was analyzed in detail \cite{wen2012solving}. However, the LMaFit alone cannot recover the block Hankel structure, which is the reason we use an ADMM step afterward to impose the structure.

%We are aware that SAKE, LORAKS and P-LORAKS use non-convex formulation for low rank completion,  since they may work better with a proper initialization.
%However,  the convex formulation \eqref{eq:data} is necessary to arrive at the explicit sampling rate  \cite{chen2014robust}
% that  employed the dual certificate and golfing scheme by
% David Gross \cite{gross2011recovering} for the nuclear norm minimization.  Therefore, our approach also utilizes the nuclear norm minimization.

\subsection{Reconstruction Flow} 	%%%%%%%%%%%%%%%%%%%%%%%%%%%%%%%%%%%%%%%%%%%%%%%%%%%%%%%%%%%%%%%%%%%%%%%%%%%%%%%%%%%%%

As shown in Fig.~\ref{fig:flowchart_kxky}, the  ALOHA framework is comprised with several major steps:  pyramidal decomposition,  k-space weighting, Hankel matrix formation, rank estimation,  SVD-free low rank matrix completion, and  k-space unweighting. %After these  steps, approximated k-space. %measurements at a current patch are merged into original positions by replacing previously reconstructed values.
%The comparison with other signal models will be provided in Discussion section.
Here, we will explain these in more detail.

The pyramidal decomposition is performed as follows. First,  in  static MR data acquisition illustrated in Fig.~\ref{fig:flowchart_kxky}(a),  
the $k_x-k_y$ corresponds to the two phase encoding directions that are downsampled. Thus, the Hankel matrix is constructed from $k_x-k_y$ data.
After a k-space interpolation from a finer scale,
the data  at the current scale  is defined to contain one-fourth of data around zero frequency  from that of the previous scale.
Second, in the case of dynamic MR imaging shown in Fig. \ref{fig:flowchart_kxky}(b),   $k_x$  samples from the readout direction are fully acquired,
whereas the $k_y$ directional phase encoding are downsampled along the temporal direction $t$.
Therefore, the data in $k_y-t$ space (or simply, $k-t$ space) are downsampled,  from which we construct a Hankel structure matrix.
In pyramidal decomposition,  after a k-space interpolation from a finer scale,  the $k_y-t$ data in the current scale
contains a half of the data from that of the previous scale.   Note that  the wavelet decomposition is performed only along the spatial domain, 
so the pyramidal decomposition is only performed along $k_y$ direction.
This construction of Hankel matrix is due to the observation that the dynamic signal is sparse in spatial wavelet and temporal Fourier transform domain \cite{Lee2015MRPM}. 
See more details  in our recent work \cite{Lee2015MRPM}.
%All these 
%
% reduced by half   from the previous scale  including zero frequencies along the temporal axis
%
%
%$k_x-k_y$ data matrix size in the current scale is reduced by a fourth from that of the  previous scale. Second, . %Second, the results from the previous level of pyramid is used as an initialization for the next level. 

In both cases, the estimated k-space data at the lower scale are used to initialize the low rank matrix completion algorithm at the current scale. This accelerates the convergence speed. Moreover, due to the additional chance of  refining the estimates,
 more important k-space samples at the low frequency regions are refined furthermore compared to the high frequency k-space samples.  Consequently, the overall computational burden of the low rank matrix competition algorithm is significantly reduced while the overall quality is still maintained.

 %This accelerates the convergence rate of the low rank matrix completion algorithm. Moreover, due to the progressive refinement of the k-space missing data estimate, more important k-space samplesat the low frequency regions are refined furthermore compared to the high frequency k-space ss.  Accordingly, the overall computational burden of the low rank matrix competition algorithm is significantly reduced while the overall quality is still maintained.

The k-space weighting is performed using wavelets.
Specifically, based on our discussions on maximally localized smoothing function, 
we use a Haar wavelet expansion whose 
spectrum is given by
%\begin{eqnarray}\label{eq:wavelet_spline_dyadic}
%\hat\psi(\omega)=\frac{-i\omega}{4}\left(\frac{\sin{\omega/4} }{\omega/4}\right)^{m+2}\exp\left( \frac{i\omega(1+\epsilon)}{4}\right)&,& \quad \epsilon=\left\{ 
%  \begin{array}{l l}
%    1, & \quad \text{if $m$ is even}\\
%    0, & \quad \text{if $m$ is odd}
%  \end{array} \right. .
%\end{eqnarray}
\begin{eqnarray}\label{eq:wavelet_spline_dyadic}
\hat\psi_0(\omega)=\frac{i\omega}{2}\left(\frac{\sin{\omega/4} }{\omega/4}\right)^2\exp\left( -i\frac{\omega}{2}\right)& \ .%\hat\psi(\omega)=\frac{-i\omega}{4}\left(\frac{\sin{\omega/4} }{\omega/4}\right)^4\exp\left( \frac{-i\omega}{2}\right)& \ .
\end{eqnarray}
%\begin{eqnarray}\label{eq:wavelet_spline_dyadic}
%\hat\psi(\omega)=\frac{i\omega}{4}\left(\frac{\sin{\omega/4} }{\omega/4}\right)^4\exp\left( \frac{i\omega}{2}\right)& \ .%\hat\psi(\omega)=\frac{-i\omega}{4}\left(\frac{\sin{\omega/4} }{\omega/4}\right)^4\exp\left( \frac{-i\omega}{2}\right)& \ .
%\end{eqnarray}
The corresponding k-space weighting at the $s$-scale is given by
 $$\hat l_s(\omega) =  
\frac{1}{\sqrt{2^s} }\frac{i2^s\omega}{2}\left(\frac{\sin{2^s\omega/4} }{2^s\omega/4}\right)^2e^{- i\frac{2^s\omega}{2}} .$$
For the case of static  MRI in Fig.~\ref{fig:flowchart_kxky}(a),  we use 2-D weighting by assuming that the image is sparse in 2-D dyadic wavelet transform domain.
Care needs to be taken when applying the weighting to 2D Fourier domain because there are two frequency  variables $(\omega_x,\omega_y)$.
One could use a  separable weighting  $\hat l(\omega_x,\omega_y) = \hat l(\omega_x)\hat l(\omega_y)$; however, the resulting problem is  that
the missing k-space components along the frequency axis $\omega_x=0$ or $\omega_y=0$ cannot be recovered. Consequently,
we applied the weighting sequentially along each axis, i.e. we solve \eqref{eq:EMaC2} by applying $\hat l(\omega_x)$ first, which is followed by solving \eqref{eq:EMaC2} with $\hat l(\omega_y)$.
However, simultaneous weighting would be possible as demonstrated in a recent work for off-the-grid recovery of piecewise constant image \cite{ongie2015off}.
For the case of dynamic imaging in Fig.~\ref{fig:flowchart_kxky}(b), one dimensional weighting along the phase encoding direction was applied  as explained in detail
in \cite{Lee2015MRPM}.  
Finally, after the k-space interpolation, the k-space unweighting is done in k-space pixel-by-pixel by dividing the reconstructed value with \eqref{eq:wavelet_spline_dyadic}.
Note that \eqref{eq:wavelet_spline_dyadic} has zero value at the DC frequency. However, because we acquire the DC value as well as some of the low frequency k-space data, the problem of dividing by zero never happened.

 Because LORAKS is similar to the ALOHA without weighting,  we use LORAKS as a reference to contrast
 why the proposed ALOHA framework has many advantages. The implementation of LORAKS was based on the source code available in the original author's homepage,
 which requires manual setting of estimated ranks.  We chose the rank for LORAKS that gave the best reconstruction quality.
 In Discussion, we also provided reconstruction results by ALOHA without weighting to control the confounds and confirm the importance of the k-space weighting
 in constructing Hankel matrix.

We used TITAN GTX graphic card for graphic processor unit (GPU) and i7-4770k CPU and the codes were written in  MATLAB 2015a (Mathwork, Natick).
To accelerate the algorithm, most part of the MATLAB codes were implemented using Compute Unified Device Architecture (CUDA) for GPU.

\subsection{MR Acquisition and Reconstruction Parameters} 	%%%%%%%%%%%%%%%%%%%%%%%%%%%%%%%%%%%%%%%%%%%%%%%%%%%%%%%%%%%%%%%

To assess the performance of ALOHA for single coil compressed sensing imaging,  k-space raw data from an MR headscan was obtained  with Siemens Tim Trio 3T scanner using balanced steady-state free precession (bSSFP) sequence. 
The acquisition parameters were as follows: TR/TE $= 10.68/5.34\mbox{ms}$, $208\times256$ acquisition matrix (partial Fourier factor 7/8, oversampling factor $50\%$), and number of slices is 104 with $2\mbox{mm}$ slice thickness. The field-of-view (FOV) was $178\times220\mbox{mm}^2$. We used the central coronal slice.
%by inverse Fourier transform along the readout direction.

A retrospective down-sampling mask was generated according to a two dimensional Gaussian distribution and the data at the central $7\times 7$  region around zero frequency were obtained additionally.  This is equivalent to assume a 3D imaging scenario where the readout direction is fully sampled and the downsampling is done in the remaining 2-D phase encoding direction.
Downsampling factors of four was used to generate sampling masks. However, data was obtained as partial Fourier measurements, so effective downsampling ratio was 4.13. The 2-D k-space weighting using \eqref{eq:wavelet_spline_dyadic} was used. The ALOHA reconstruction was conducted using the following parameters: three levels of pyramidal decomposition, and decreasing LMaFit tolerance values ($5\times10^{-2},5\times10^{-3},5\times10^{-4}$) at each level of the pyramid. In addition, an initial rank estimate for LMaFit started with one and was refined automatically in an increasing sequence, the annihilating filter size was 23$\times$23, and the ADMM parameter was $\mu=10^3$.

%
%{\bf To assess the performance of ALOHA for single coil compressed sensing imaging,  k-space raw data from an MR headscan was obtained  with Siemens Verio 3T scanner using 2D SE sequence. 
%The acquisition parameters were as follows: TR/TE $= 4000/100\mbox{ms}$, $256\times256$ acquisition matrix, and six z-slices with $5\mbox{mm}$ slice thickness. The field-of-view (FOV) was $240\times240\mbox{mm}^2$, and the number of coils was four.  
%To perform single coil reconstruction, we chose  MR raw data from  one coil out of the four coils.}
%
%A retrospective down-sampling mask was generated according to a two dimensional Gaussian distribution and the data at the central $7\times 7$  region around zero frequency were obtained additionally.  This is equivalent to assume a 3D imaging scenario where the readout direction is fully sampled and the downsampling is done in the remaining 2-D phase encoding direction.
%Downsampling factors of six was used to generate sampling masks.  The 2-D k-space weighting using \eqref{eq:wavelet_spline_dyadic} was used. The ALOHA reconstruction was conducted using the following parameters: three levels of pyramidal decomposition, and decreasing LMaFit tolerance values ($10^{-1},10^{-2},10^{-3}$) at each level of the pyramid. In addition, an initial rank estimate for LMaFit started with one and was refined automatically in an increasing sequence, the annihilating filter size was 15$\times$15, and the ADMM parameter was $\mu=10^3$. 

For the  compressed sensing approach, we used two approaches with the same data and the same sampling masks: (1) the sparsity in wavelet domain (which we denote $l_1$-wavelet) \cite{lustig2007sparse}, and (2) the split Bregman method for the total variation \cite{goldstein2009split}. In case of $l_1$-wavelet approach, we implemented an ADMM algorithm using wavelet domain sparsity. In addition, for comparison with the existing state-of-the art approach using Hankel structured matrix completion algorithm, LORAKS was compared. In particular, C-based LORAKS was chosen because it exploits the image domain sparsity.
The parameters for the $l_1$-wavelet and TV approaches were optimized to have the best performance in terms of the normalized mean square error (NMSE), which is defined by
 	$\mbox{NMSE}(\xb)={\|\xb -\yb\|_2^2}/{\|\yb\|_2^2}$,
where $\xb$ and $\yb$ denote the reconstructed and the ground-truth images, respectively. The LORAKS parameters were chosen manually to give the best reconstruction quality.

To evaluate the performance of ALOHA in static parallel imaging,  %the same brain data  was used.
 k-space raw data from an MR headscan was obtained  with Siemens Verio 3T scanner using 2D SE sequence. 
The acquisition parameters were as follows: TR/TE $= 4000/100\mbox{ms}$, $256\times256$ acquisition matrix, and six z-slices with $5\mbox{mm}$ slice thickness. The field-of-view (FOV) was $240\times240\mbox{mm}^2$, and the number of coils was four.  
%To perform single coil reconstruction, we chose  MR raw data from  one coil out of the four coils.
Retrospectively undersampled 2-D k-space data at the acceleration factor of eight were obtained according to a
two dimensional Gaussian distribution in addition to the $7\times7$ central region around zero frequency. The data from 4 receiver coils were used. For comparison, we used the
 identical data and sampling masks for  SAKE \cite{shin2013calibrationless}, and SAKE with ESPIRiT \cite{uecker2014espirit}. Note that  GRAPPA \cite{griswold2002generalized} requires ACS lines, so  with the additional 50 samples along ACS, the effective downsampling ratio was 4.785, which is not good as the eight time acceleration in other algorithms.
SAKE and SAKE with ESPIRiT are both low rank matrix completion algorithms for Hankel structured matrix collected from the whole k-space data.
However, SAKE with ESPIRiT reduces the computational burden of the original SAKE by performing low rank matrix completion only for the $65\times 65$ central region with $7\times 7$ filter, after which
coil sensitivities are estimated using the reconstruction data. The estimated coil sensitivities are used to estimate the remaining k-space missing data through ESPIRiT \cite{uecker2014espirit}.
The parameters for SAKE and SAKE with ESPIRiT were chosen such that they provided the best reconstruction results.
The parameters for the  ALOHA are as follows: three levels of pyramidal decomposition with decreasing LMaFit tolerances ($10^{-1},10^{-2},10^{-3}$), 
 and 7$\times$7 annihilating filter.  The same  LMaFit rank estimation strategy and ADMM parameter used for single coil experiments were employed.
We generated the square root of sum of squares (SSoS) image from multi-coil reconstructions.

We also validated the performance of ALOHA for accelerated dynamic cardiac data in the $k-t$ domain. 
A cardiac cine data set was acquired using a 3T whole-body MRI scanner (Siemens; Tim Trio) equipped with a 32-element cardiac coil array. The acquisition sequence was bSSFP and prospective cardiac gating was used. The imaging parameters were as follows: FOV$ = 300 \times 300 \mbox{mm}^2$, acquisition matrix size$ = 128\times128$, TE/TR $= 1.37/2.7\mbox{ms}$, receiver bandwidth $= 1184\mbox{Hz/pixel}$, and flip angle $= 40^{\circ}$. The number of cardiac phases was 23 and the temporal resolution was 43.2ms. The k-t space samples  including four lines around zero frequency were retrospectively obtained at the reduction factor of eight according to a Gaussian distribution.
For comparison, k-t FOCUSS \cite{jung2009k}, LORAKS \cite{haldar2014low}, and SAKE \cite{shin2013calibrationless} was used. In case of LORAKS (single coil) and SAKE (multi coil), these algorithms were applied to k-t domain for dynamic reconstruction. The  parameters in k-t FOCUSS, LORAKS, and SAKE, were selected to give the best NMSE values.
 For the  ALOHA in single coil data, the following parameters were used: three level of pyramidal decomposition only along the phase encoding direction,  decreasing LMaFit tolerances ($10^{-1},10^{-2},10^{-3}$) at each scale,  and 17$\times$5 annihilating filter.
 The same  LMaFit rank estimation strategy and ADMM parameter was $\mu=10$. The k-space weighting in Eq. \eqref{eq:wavelet_spline_dyadic} was applied only along the phase encoding direction.

\begin{figure}[!hbt]
\center{
\includegraphics[trim = 0mm 0mm 0mm 0mm,clip=true,width=17cm]{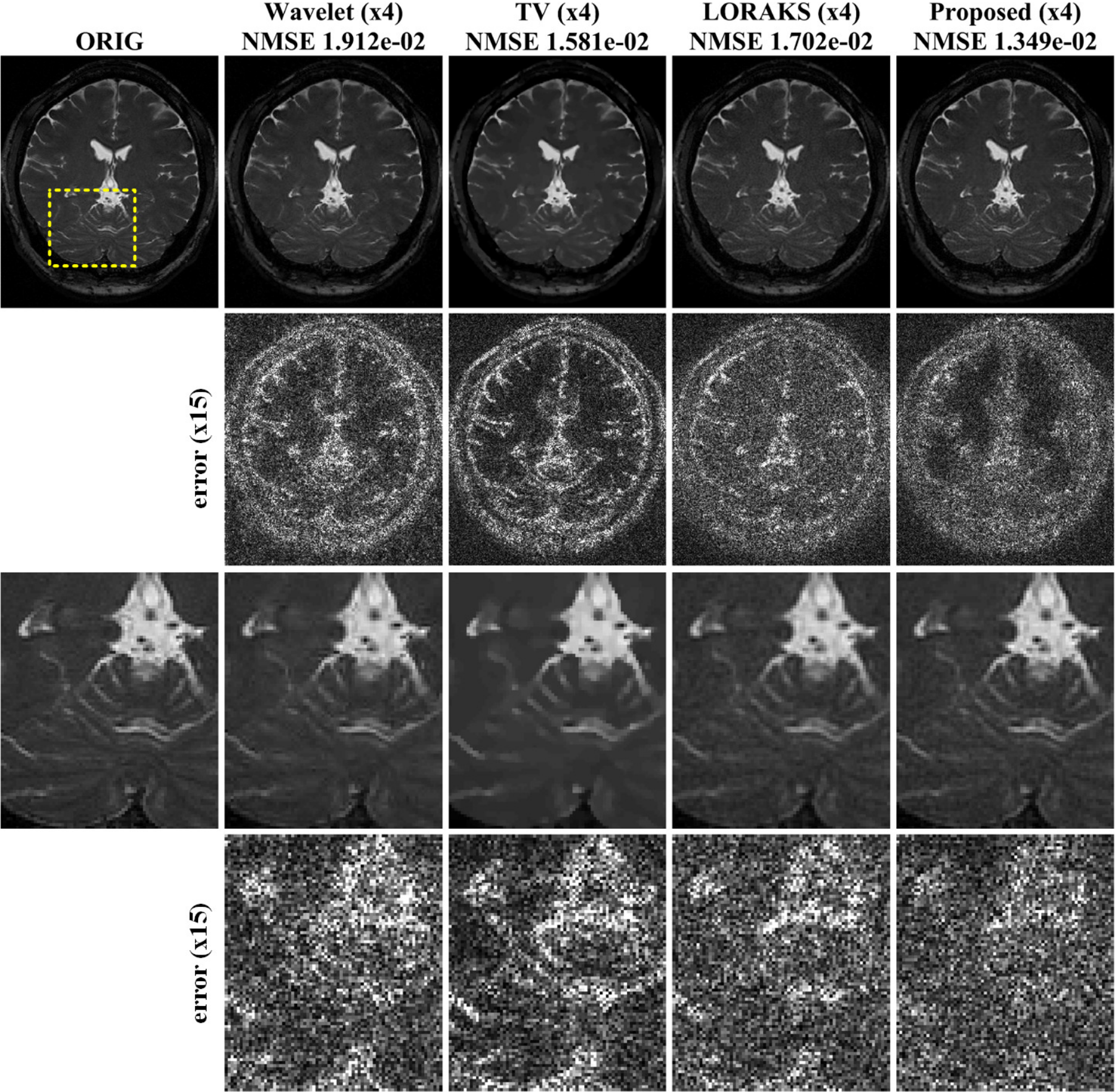}}
\caption{
Comparison with $l_1$-wavelet compressed sensing, TV compressed sensing, LORAKS,  and the proposed method at four fold acceleration factors. The data was acquired from a single channel  coil. The first row shows reconstructed images,  and the second row shows difference images between the ground-truth and the reconstructions, and the third row shows the magnified views of distorted regions in the reconstructed images. The last row shows the difference images in the magnified views.}
\label{fig:single_coil_brain}
\end{figure}

Next, we investigated the synergetic improvement of dynamic imaging from multi-channel acquisition. Four representative  coils out of 32 were used. 
The reason we chose only four coils was to verify that the generalised ALOHA can maximally exploit the multi-channel diversity even with the small number of coils. The four coils were chosen such that it covers every area of images. The same four coils were used for all algorithms for fair comparison.
In the ALOHA, the annihilating filter size was $7\times 7$.
 The same  LMaFit rank estimation strategy and ADMM parameter used before were employed.
After the reconstructions of k-space samples, the inverse Fourier transform was applied, and the SSoS images were obtained by combining the reconstructed images. 

\begin{figure}[!hbt]
\center{
\includegraphics[trim = 0mm 0mm 0mm 0mm,clip=true,width=17cm]{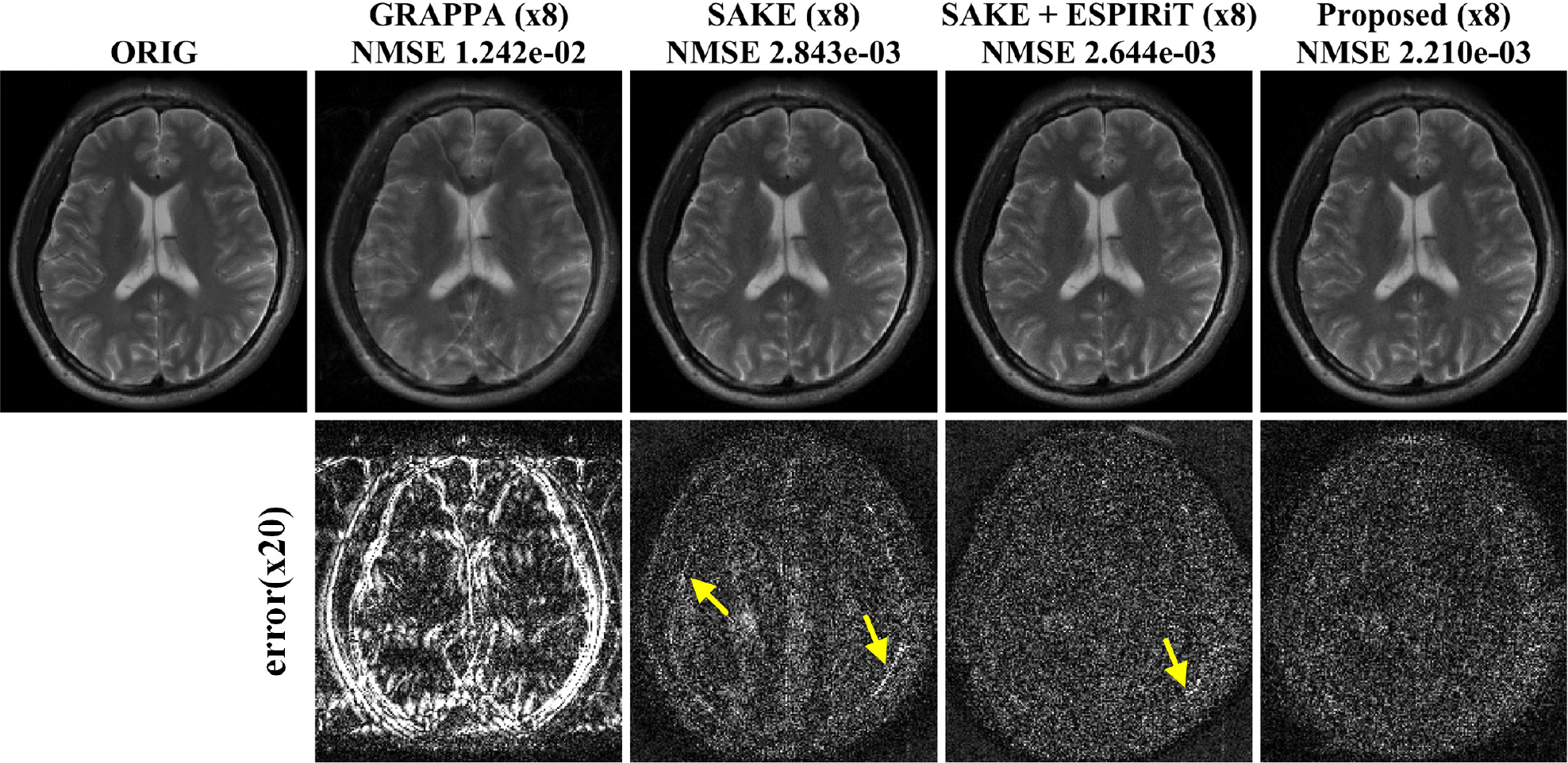}}
\caption{Parallel imaging results using GRAPPA, SAKE, SAKE with ESPIRiT and the proposed method at 8 fold acceleration. The second row shows the difference images. 
Areas with systematic artefacts are indicated by yellow arrows.  Note that  GRAPPA requires ACS lines, so  with the additional 50 samples along ACS, the effective downsampling ratio was 4.785.}
\label{fig:multi_brain}
\end{figure}

\section{Results} %************************************************************************************* RESULTS
\label{sec:result}

\subsection{Static MR experiments} 	%%%%%%%%%%%%%%%%%%%%%%%%%%%%%%%%%%%%%%%%%%%%%%%%%%%%

Reconstructed results from a single coil brain data are shown in Fig. \ref{fig:single_coil_brain} with the NMSE values.
From the NMSE values, we observed that the performance  ALOHA was quantitatively superior to the performance of $l_1$-wavelet and TV based compressed sensing approach. 
The reconstruction results by ALOHA has less perceivable distortion compared to those of $l_1$-wavelet and TV approaches.  This can be easily observed from the difference images
in the second and the fourth rows of Fig. \ref{fig:single_coil_brain}.
In the case of $l_1$-wavelets and TV, structural distortion around the image edges was easily recognizable. In the last row of Fig. \ref{fig:single_coil_brain}, the edges were reconstructed  accurately by ALOHA.  On the other hand,  the contrast  between grey matters and white matters in $l_1$-wavelets and TV reconstruction were  significantly distorted compared with that of ALOHA. 
The LORAKS reconstruction appeared better than that of  $l_1$-wavelets and TV reconstruction, but there were still remaining errors around the edges and the NMSE value was much %significantly
higher than that of ALOHA.

\begin{figure}[!t]
\center{
\includegraphics[trim = 0mm 0mm 0mm 0mm,clip=true,width=17cm]{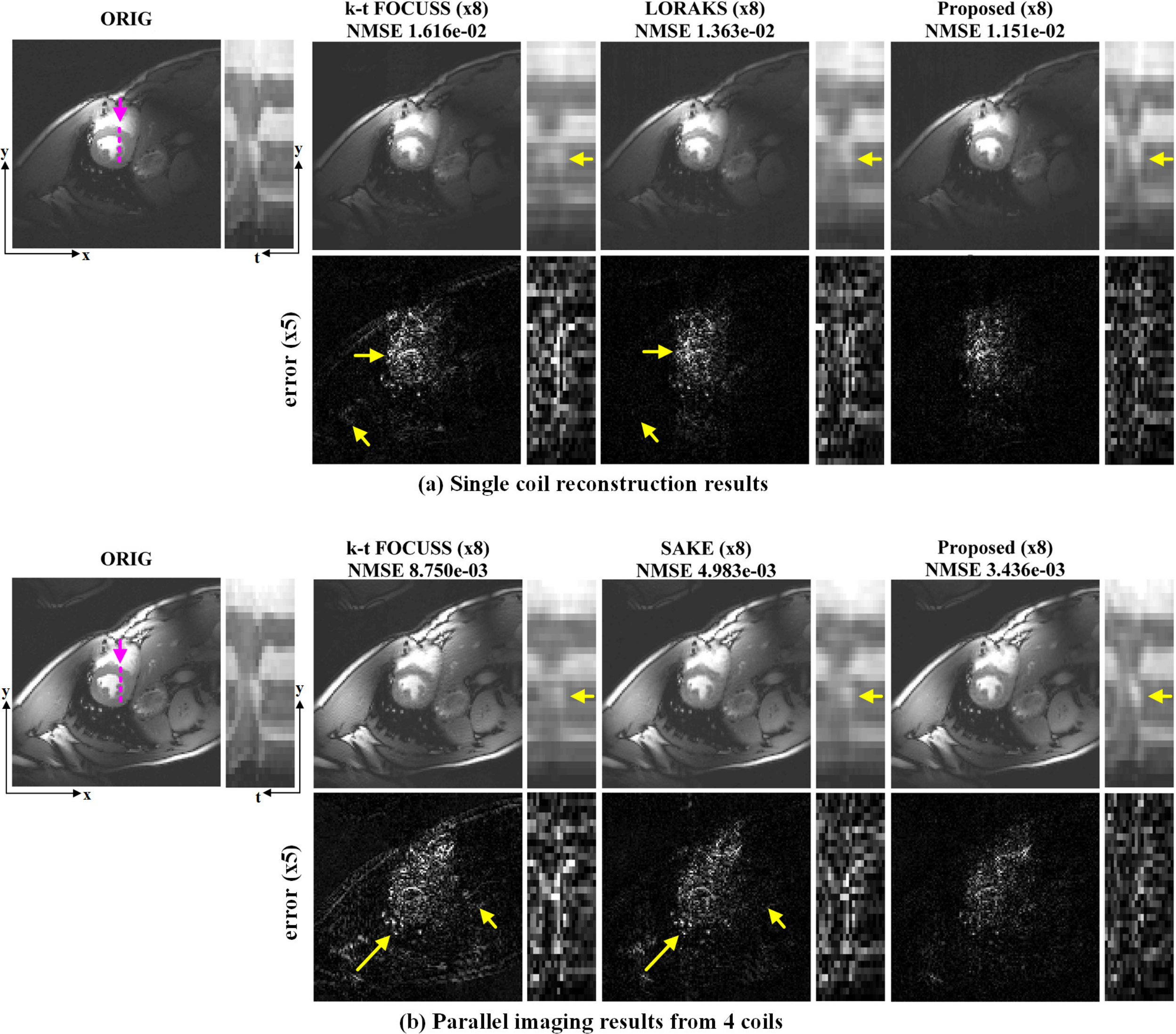}}
\caption{Reconstruction results from 8 fold accelerated k-space data using (a) single coil and (b) four coils data set. Purple lines denote the regions corresponding to y-t cross sections that are magnified along temporal axis. The second rows in both (a) and (b) show the difference images between the ground-truth and the reconstructions.}
\label{fig:dynamic}
\end{figure}

Next we compared our parallel imaging results with those of the existing approaches for additional multichannel brain data set. The NMSE results in Fig. \ref{fig:multi_brain} showed that ALOHA was most accurate. From the difference images at the second row of Fig. \ref{fig:multi_brain}, we observed that proposed method provided reconstruction results more accurately than other algorithms. In SAKE, the structures were distorted around the inner skull and the boundaries of tissues. In SAKE with ESPIRiT, overall reconstruction errors were  higher than those from ALOHA and there were still remaining errors around the skull. 
The reconstruction time was 22.2sec with our preliminary GPU implementation of ALOHA,  which attained a speed-up factor of 5 compared to CPU implementation.
On the other hand, the computational time for MATLAB version of  GRAPPA, SAKE, and SAKE+ESPIRiT  were 9s, 320.4s, and 21.6s, respectively.
%The reconstruction time was 8sec with our preliminary GPU implementation of ALOHA,  which attained a speed-up factor of  3 compared to CPU implementation.
%On the other hand, the computational time for MATLAB version of  GRAPPA, SAKE, and SAKE+ESPIRiT  were 10.9s, 85.3s, and 17.5s, respectively.
% CPU 120s, GPU 22s for ALOHA

\subsection{Dynamic MR experiments} 	%%%%%%%%%%%%%%%%%%%%%%%%%%%%%%%%%%%%%%%%%%%%%%%%%%%%

Using the sub-sampled k-space data  at the acceleration factor of eight,
the average NMSE values of k-t FOCUSS, LORAKS, and ALOHA, were $1.616\times 10^{-2}$, $1.363\times 10^{-2}$, and, $1.151\times 10^{-2}$, respectively.
The sub-sampled data was collected  according to a Gaussian distribution and included the four center lines around zero frequency.
 The average NMSE values were calculated using all temporal frames.
These results confirmed that the proposed method outperformed k-t FOCUSS and LORAKS. As shown in Fig.~\ref{fig:dynamic}(a), the temporal profile (indicated as a broken purple line) of the proposed reconstruction provided more accurate 
structures especially in the systole phase that were comparable to the true one, whereas the temporal variation in the k-t FOCUSS and LORAKS reconstruction became smoother and more blurry along the temporal dimension. 

The NMSE values of the parallel dynamic imaging results from k-t FOCUSS, SAKE, and the proposed method using four coil k-space data were $8.75\times 10^{-3}$, $4.983\times 10^{-3}$ and $3.436\times 10^{-3}$, respectively, which quantitatively showed that the proposed method outperformed k-t FOCUSS and SAKE.
Reduced residual artifacts were perceivable  in the ALOHA difference images in Fig.~\ref{fig:dynamic}(b). Moreover, the temporal profiles of the proposed reconstruction  showed more accurate structures  which were comparable to the true one,  whereas the dynamic slice profile  from k-t FOCUSS and SAKE showed smoother and more blurry transition. In particular, the proposed algorithm resulted in more accurate reconstructions of dynamic  changes at the heart wall in the systole phase as shown in Fig.~\ref{fig:dynamic}(b).

\section{Discussion} 	%******************************************************************* 	DISCUSSION
\label{sec:discussion}

\subsection{Effects of k-space weighting}

Figures~\ref{fig:weighting}(a)(b) illustrate the effects of  wavelet weighting schemes. %Note that the 1st order TV and Haar wavelet give more weights to higher frequency data.
In order to also demonstrate the sensitivity with respect to annihilating filter size,
we calculated the NMSE values by changing the filter size.  Furthermore, to decouple the confounds originated from different implementation of C-based LORAKS and the ALOHA,
the experimental results were generated using the same ALOHA framework with identical parameter setting, except for those related to the weighting.
Figure~\ref{fig:weighting}(a) showed the single channel brain reconstruction results at the acceleration factor of 6.
With weighting, the ALOHA reconstruction was conducted using the following parameters: three levels of pyramidal decomposition, and decreasing LMaFit tolerance values ($5\times10^{-2},5\times10^{-3},5\times10^{-4}$) 
at each level of the pyramid. In addition, an initial rank estimate for LMaFit started with one and was refined automatically in an increasing sequence, and the ADMM parameter was $\mu=10^3$. Note that these parameters were same with those for \eqref{fig:single_coil_brain}.
For the case of non-weighted implementation of ALOHA, the other parameters are exactly the same except the weighting.  %Because multi-level 
The results showed that the NMSE values of the weighted ALOHA are consistently better than those of unweighted ALOHA regardless of the annihilating filter size.
Moreover, the NMSE values were not sensitive to the annihilating filter size for the cases of the proposed ALOHA, whose NMSE values converged.
The reconstruction results from two implementation at the minimum NMSE values (marked as stars in Figures~\ref{fig:weighting}(a)(b) ) were illustrated, which again clearly showed that the residual errors of the weighted ALOHA is significantly smaller than the unweighted version of ALOHA.

Similar results were obtained from the dynamic cardiac imaging data in Figure~\ref{fig:weighting}(b).
With weighting, the ALOHA reconstruction was conducted using the following parameters: three levels of pyramidal decomposition, and decreasing LMaFit tolerance values ($10^{-1},10^{-2}, 10^{-3}$) at each level of the pyramid. In addition, an initial rank estimate for LMaFit started with one and was refined automatically in an increasing sequence, and the ADMM parameter was $\mu=10$. % (same with Fig.3)\\
The reconstruction parameters for the unweighted implementation of ALOHA were exactly the same except the weighting.
The results showed that the unweighted ALOHA is very sensitive to the annihilating filter size, which showed the divergent behavior as the filter size increases.
However, the proposed weighted ALOHA exhibited the convergent behaviors. This result clearly confirmed Proposition~\ref{prp:duality} saying that the low rank structure is invariant as long as the annihilating filter size is bigger than the transform domain sparsity level. 
The reconstruction results at the minimum NMSE values clearly showed that the proposed approach provided more clearly transition between diastole and systole phases.

 \begin{figure}[!ht]
\center{
\includegraphics[width=15cm]{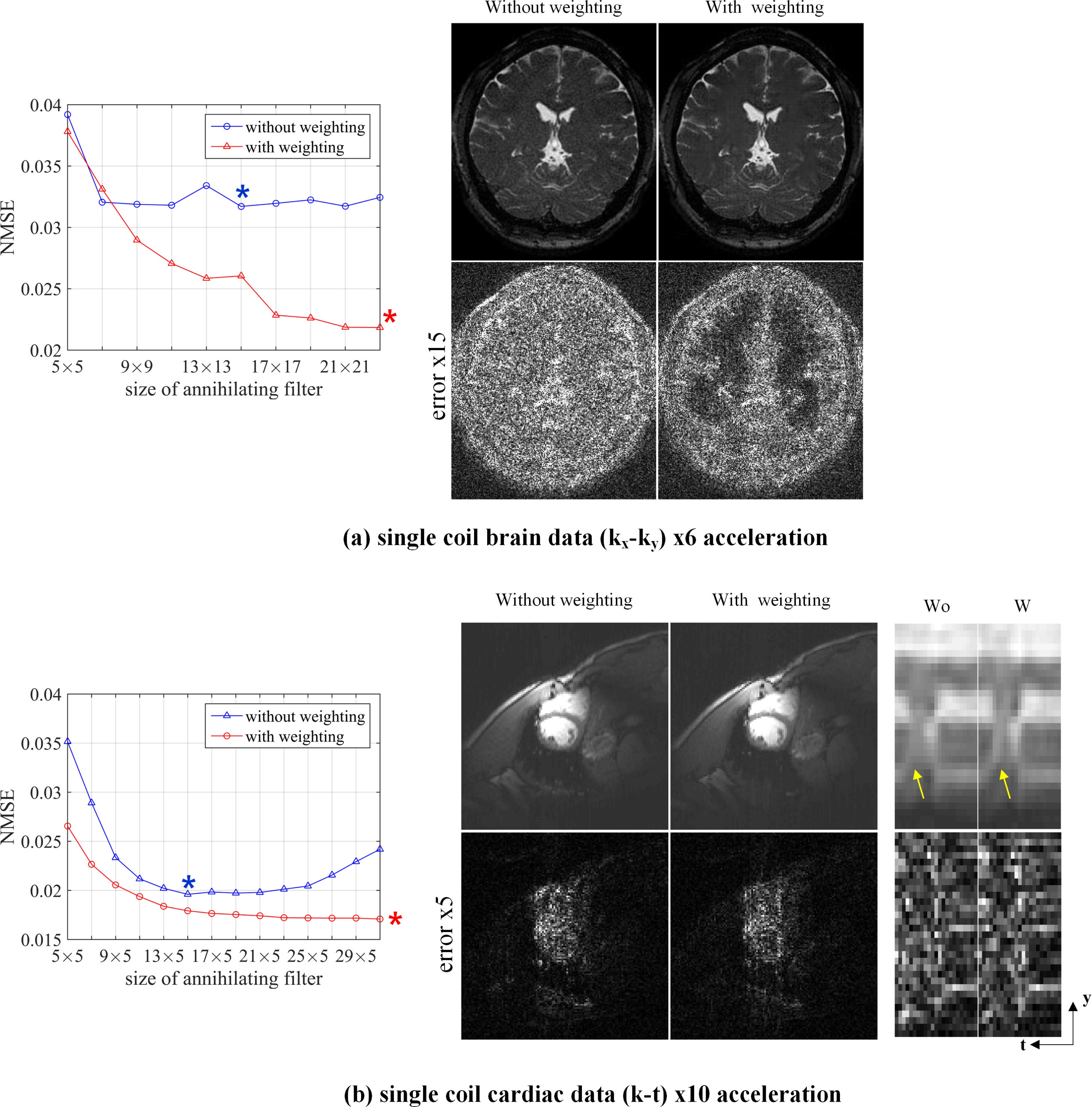}}
\caption{The effect of weighting and annihilating filter size in ALOHA implementation. (a) Single coil brain data results, and (b) single coil dynamic cardiac imaging results.
The results clearly showed that the weighting is necessary for ALOHA to exploit the transform domain sparsity. On the other hand, the unweighted implementation exhibited
divergent behavior with increasing filter size.}
%\caption{(a) Various k-space weighting schemes (uniform, 1st order TV,  and Haar wavelet) and
%the corresponding image domain representation of the weighted spectrums.
%The left most column shows singular value distribution of weighted Hankel matrices from the $1/4\times 1/4$ central k-space region. 
%Using above k-space weighting schemes, figures (b) shows difference images between reconstructed images and ground truth images.
%}
\label{fig:weighting}
\end{figure}

\subsection{Hyper-Parameter Estimation}

\begin{figure}[!bt]
\center{
\includegraphics[height=6cm, width= 9cm]{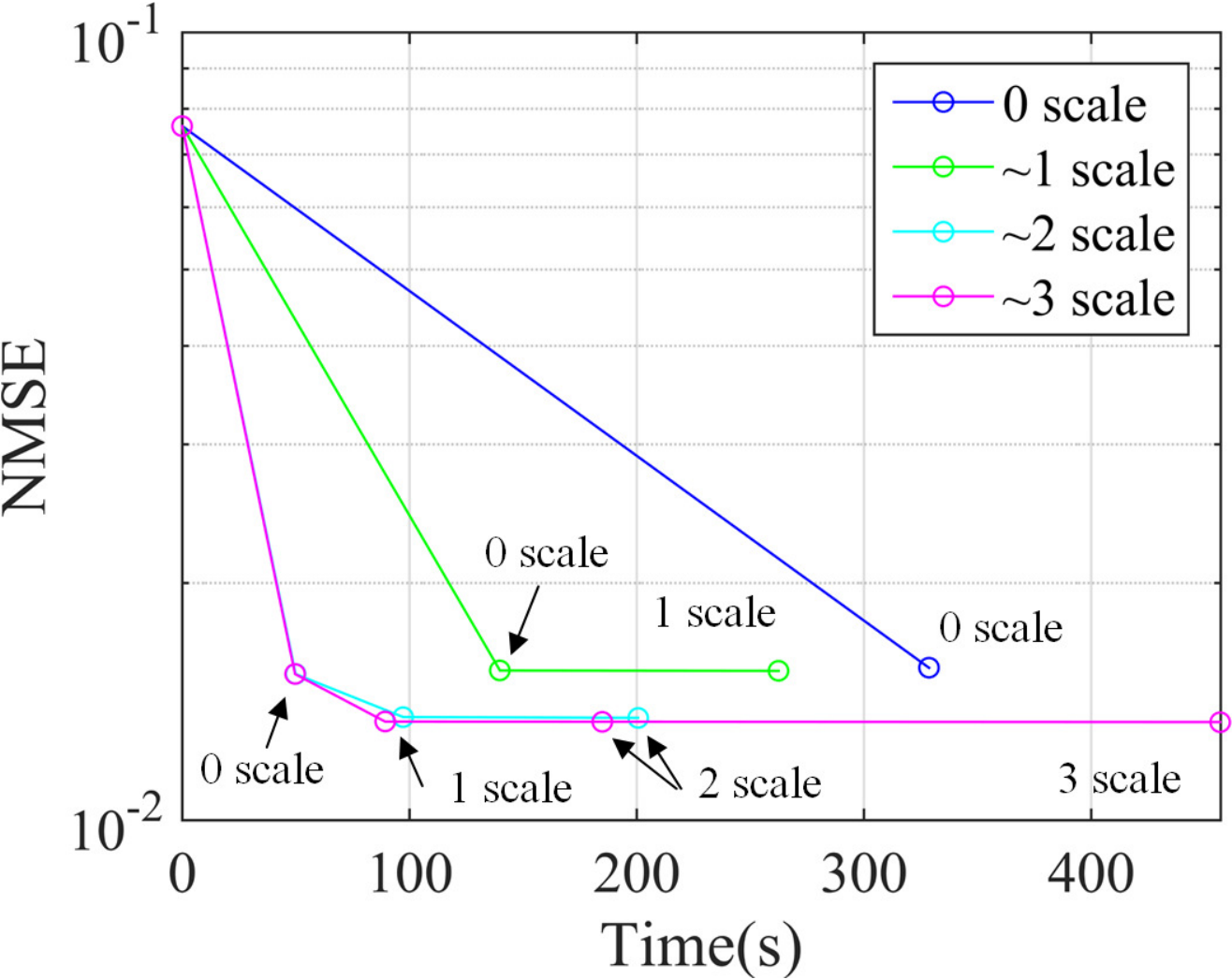}}
\caption{Reconstruction NMSE values with respect to different decomposition levels.  The data is from single channel static MR  reconstruction results  in Fig. \ref{fig:single_coil_brain}. 
%The results showed that the decomposition level of 3 (scales from 0 to 2) is as good as the decomposition level 4 (scales from 0 to 3).
%(a) 
% (b) NMSE  values with respect to different iteration strategy.  The number on each line segment refers the iteration number.
}\label{fig:time_per_level}
\end{figure}

In order to show the importance of the pyramidal decomposition,  Fig. \ref{fig:time_per_level} plots the computational time versus NMSE values  of single coil brain data by changing the number of decomposition levels.  
As discussed before,  the maximum decomposition level is determined by the acquired low frequency components and the annihilating filter size.
In this data set, k-space dimension was $208\times 512$ and the annihilating filter size was $23 \times 23$, and $7\times 7$  k-space data around zero frequency were acquired. By converting the Hankel matrix dimension in \eqref{eq:3dhank}  for the $s$-scale, the number of rows should not be smaller than the number of columns; so we should impose constraint
%
%{\bf In this data set, k-space dimension was $256\times 256$ and the annihilating filter size was $15\times 15$, and $7\times 7$ k-space data around zero frequency were acquired. By converting the Hankel matrix dimension in \eqref{eq:3dhank} for the $s$-scale, the number of rows should not be smaller than the number of columns; so 
% we should impose constraints 
$$n_1/2^s-p_1+1 = 208/2^s - 23 +1 \geq  23,$$ %\max\{7,15\},$$ 
where $s$ is the scale. This provides $s\leq 2$, and the maximum scale becomes 2. 
%However, % the performance improvement can be negligible even if the maximum scale is set to the maximally allowable scale. Therefore, the maximum scale has to be determined by reconstruction performance on the dataset.
{%\bf
 Fig. \ref{fig:time_per_level} 
showed that the performance gain increases as a scale increases; as expected,  for $s\geq 3$,  the performance improvement was negligible. 
Therefore,  the maximum scale was determined as $s=2$. 
}
%Because we should include $s=0$, the total number of decomposition level is 3 in this example.

{
The other important reconstruction parameters  include   the size of the annihilating filter, the number of iterations,  and tolerances used in LMaFit algorithm.
Recall that  the annihilating filter size corresponds to the matrix pencil size in sensor array signal processing \cite{hua1990matrix}, and it should be
set larger than the sparsity level of the transform coefficients.    In fact, by considering the expression of $c_s$ in \eqref{eq:cs}, the optimal  annihilating filter size for 2D data
is $n_1/2 \times m_1/2$, where $n_1 \times m_1$ denotes the full k-space data size.
However, such large annihilating filter size introduces significant  computational burden, so we tried to reduce the filter size as long as the image quality is not degraded. 
Based on extensive experiments,  we found that  in single coil image, the filter size  should be set larger than that of parallel imaging,  because the annihilating filter size is solely determined by the sparsity level.
In parallel imaging,  there exist additional annihilating filters from the intercoil relationship, so the annihilating filter size for each Hankel matrix construction can be set  smaller than that of a single coil imaging.
}

%The number of iterations for ADMM steps were determined scale by scale. Specifically, for high frequency k-space data reconstruction,
%many iterations  are not required because the measurements are noisy, so the constraints in \eqref{eq:EMaC} do not need to be met tightly.
%On the other hand, because reconstruction accuracy of low frequency k-space data is important,  more iteration is required at high scales. 
%Fig.~\ref{fig:time_per_level}(b) shows the NMSE value versus execution time for two different strategies: one with the same number of iterations for all scale, and the other with the increasing number of iterations for higher scales.  The results clearly showed that the increasing number of iteration with the scale was most effective.

Finally, the tolerance level for LMaFit, which corresponds to the fitting accuracy,  plays key role in 
 determining the initial rank estimate.
The initial rank estimate need not be close to the exact rank, but it was used to define the dimension of $U$ and $V$ in ADMM. 
We found that  the tolerance level could be determined by considering different noise contributions in k-space data.
Specifically, higher frequency components are usually contaminated by higher level of noises compared to the low frequency k-space data, so the LMaFit
 fitting accuracy  need not be enforced strictly. This was the case when LMaFit was applied at a lower scale, since high frequency k-space data are more weighted and noises were boosted. 
 On the  other hand, more accurate fitting is required for higher scale data where the lower frequency k-space data are more weighted.
 Consequently, we chose decreasing values of tolerances per scale for in-vivo experiments.

%First of all, from the noise characteristic of k-space signals, the low scale patch has a low SNR whereas the high scale patch has a high SNR property. Based on that property, the rank of augmented Hankel matrix from low scale patches needs to be estimated in less tight boundaries of consistency between estimated values and measurements. This could be realized by a high tolerance level for LMaFit. On the other hand, a low tolerance level which can give tight boundaries for rank estimation, is suitable for the estimation of the rank of Hankel matrix from the high scale patches due to high SNR.
%From this observation, the higher scale is applied, the lower tolerance is needed. As you can see in result section, we choose decreasing values of tolerances for in-vivo experiments based on the mentioned reason.

%{\bf  PLEASE ADD DESCRIPTION.}
%
%We have also applied the proposed method for various applications such as T1 and T2 parameter mapping with accelerated {\em in vivo} pulse sequence implementation\cite{lee2015aloha}, which consistenly showed that the proposed method significantly outperformed the existing state-of-the art
%approaches. More details can be found in our companion paper \cite{lee2015aloha}.

\subsection{Concatenation Direction  of Hankel Matrices for Parallel MRI}

\begin{figure}[!bt]
\centering
\includegraphics[trim = 0mm 0mm 0mm 0mm,clip=true,width=4 in]{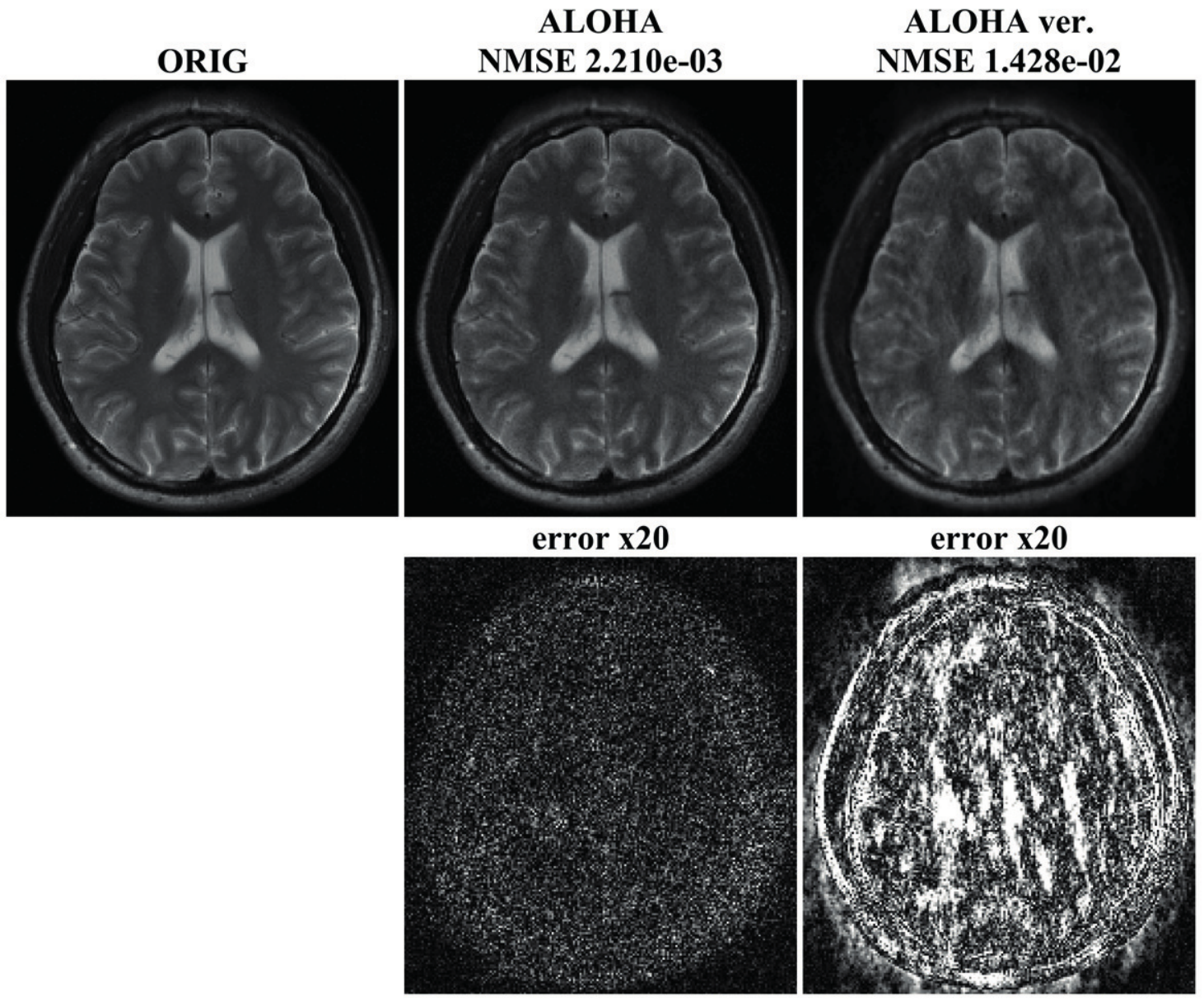}%{weighting_v5.pdf}
\caption{Dependency of concatenation direction of Hankel matrix for parallel MRI: (middle) side by side augmentation of Hankel matrices, and (right) vertical augmentation of Hankel matrices. Side-by-side augumentation provided the superior reconstruction results.}
\label{fig:augmentation}
\end{figure}

When generalizing the Hankel matrix model from single to multiple channels, we chose to stack the Hankel matrices corresponding to individual channels side by side. 
 One  could suggest  stacking them one on top of another, which would represent a different matricization. 
Based on our discussion in Section~\ref{sec:theory},  in order to stack the Hankel matrix on top of another, the Hankel matrices should share the same annihilating filter. 
 However,  for the TV signal, the structure in \eqref{eq:lg} showed that it is not possible to find a common annihilating filter $\hat h_c(\omega)$ such that
 $\hat h_c(\omega) \ast \left(\hat l(\omega)\hat g_i(\omega)\right)= 0,  \forall i.$
On the other hand,     the relationship in Eq.~\eqref{eq:an_mat2} clearly showed that
there are inter-coil annihilating filters that can cancel out each combinations of the k-space signals.  Therefore,  to utilize  \eqref{eq:an_mat2},  we should stack the
Hankel matrix size by side.  

To confirm the theoretical findings, we compared the reconstruction results from the two different stacking of Hankel matrices.
As shown in Fig. \ref{fig:augmentation}, the vertical augmentation of Hankel matrices provided inferior reconstruction results compred to the  side by side augmentation of Hankel
matrices.
% In the section of augmentation of Hankel matrix for multi-coil data, we derived the inter-coil relationship between coil sensitivity maps and this property was directly relevant in the form of side by side augmentation. 
 %When we formed vertical augmented Hankel matrix, the reconstruction result showed significant distortion over all regions of brain. 
 %The parameters were set to be same with side by side augmentation.
This again confirmed the importance of the concatenation direction and the analysis by Proposition~\ref{prp:prank}.

\subsection{Further Extensions}

Note that this work is a first step towards unifying sparse and low rank models,  so there are many rooms for improvement.
In particular, it could be extended for other measurement and sparsity models.
For example,  while Eq.~\eqref{eq:EMaC2}  is based on a data equality constraint that
focuses on standard Cartesian image reconstruction,  we could convert
$(P)$ in \eqref{eq:EMaC2}  to deal with the cases of general non-Cartesian imaging and/or noisy measurements:
\begin{eqnarray}\label{eq:EMaC5}
(P')
 &\min_{\mb\in \Cd^{n} } & \rank \hank (\mb)  \\
&\mbox{subject to } & \| P_\Omega(\mb) - A_\Omega(\hat \lb \odot \hat \fb) \| \leq \delta  \nonumber  \  ,
\end{eqnarray}
 where $\mb$ denotes the spectrum data on cartesian grid, 
 $A_\Omega$ denotes a linear mapping that interpolates non-cartesian  data $\hat \lb \odot \hat \fb$ to the nearest cartesian grid index $\Omega$, 
 and
  the noisy level $\delta$ is determined by the gridding or measurement noises.
 The minimization problem $(P')$ can  be also addressed using an SVD-free ADMM approach similar to \eqref{eq:ADMM}, except that the indicator function-based data fidelity term is
 changed to $l_2$ data fidelity term.
 The detailed algorithm is described in \cite{ye2015compressive}.
 
 In addition, ALOHA could be extended for 
 %, and it is not clear how such a model might be generalized for non-Cartesian imaging, 
 reconstruction models that incorporate various system non-idealities.  
 For example,  in our recent work \cite{Lee2015fMRI},  we showed that EPI  ghost artifacts that originate from off-resonance related inconsistencies between odd and even echoes can be removed
by exploiting
 that the differential k-space data between the even and odd echoes is a Fourier transform of an underlying sparse image. 
Specifically, we can construct a rank-deficient concatenated Hankel structured matrix from even and odd k-space data, whose missing data can be interpolated using ALOHA.
However, the extension of ALOHA for general off-resonance corrected MR reconstruction is still an open problem, which needs a further investigation in the future.

% 
%  or enable direct estimation of quantitative MRI parameters (e.g., fat fraction) if so desired, or in a multi-channel setting account for channel noise correlation.  Please discuss how the ALOHA framework could potentially generalize for these and other common but "non-standard" MRI scenarios.
% MR images of any type are generally not well sparsified by standard orthogonal wavelet transforms, e.g., dyadic Haar, and typically require cycle spinning or use of an auxiliary transform penalty (e.g., TV) to obtain artifact free image reconstructions.  While I appreciate that this work is a first step towards unifying sparse and low rank models, please add some discussion about how the k-space weighed version of ALOHA might be formulated for other sparsity models, such as via redundant wavelets, patch based models (e.g., dictionary based), TV/Markov penalties, etc....
%

While this paper assumes that signals can be sparsified using dyadic wavelet ransforms,   in general,  signals can be more easily  sparsified using non-decimated redundant wavelets.
In this case,
the corresponding non-decimated discrete sample at the $s$-scale is given by
$f^{s}[l]:=\left. \langle f, \psi_s(\cdot - x) \rangle\right|_{x=l}$ 
%with $f_{r} = \mathcal{F}^{-1}\{\hat f_{r} \}$, we have 
whose spectrum can be represented by
\begin{eqnarray}%\label{eq:wdec3}
\hat f^{s}(e^{i\omega}):= 
 \sum_{l}   f^{s}[l]e^{-i\omega l}  
&=&
 \frac{1}{\sqrt{2^{s}}} \hat \psi^*(2^{s} \omega) \hat f (\omega).
\end{eqnarray}
Because there exists no  aliasing components due to the lack of  downsampling,  the construction of weighted Hankel matrix is much simpler than 
the dyadic wavelet transform. However, one of the potential downsides is
that  the number of  non-zero wavelet transform coefficients  $f^s[l]$ increases up to $2^sk$.
Therefore, more k-space measurements are required to recover k-space data  corresponding to the coarse scale wavelet coefficients.
One potential solution would be  to utilize the recovered high frequency k-space samples as additional measurements for interpolating the coarser level
k-space samples. 
Note that  this is different from dyadic wavelet case which  discards higher frequency k-space samples  in recovering low frequency k-space samples.
However, the efficacy of this proposal needs to be evaluated systematically, which is beyond the scope of this paper.

Other than wavelet or TV representation, modern compressed sensing approach deals with advanced sparsifying transforms such as
patch-based methods,  dictionary learning,  Markov penalties, and so on. The extension of ALOHA for such model would be  important and rewarding, which  requires  more extensive investigation
in subsequent research. %, which will be reported elsewhere. % future research topics.
Recently,  ALOHA was successfully used for MR parameter mapping \cite{Lee2015MRPM}.
Considering recent success of direct parameter mappings \cite{velikina2013accelerating,ma2013magnetic},  the extension of ALOHA for direct parameter mapping would be very interesting, which
deservers further investigations in the future.

\section{Conclusion}\label{sec:conclusion} 	%******************************************************************************CONCLUSION

In this paper, we proposed a general framework for  annihilating filter based low-rank Hankel matrix approach (ALOHA) for static and dynamic MRI inspired by recent calibration-free k-space methods such as SAKE and LORAKS/P-LORAKS.
%We confirmed the existing observation that the low rankness of the Hankel structured matrix in
%SAKE and LORAKS/P-LORAKS  are derived from inter-coil correlation and image domain sparsity, and revealed that the required number of k-space data
%is given by $ck\log^4 n_1$, where $k$ denotes the sparsity level.
Because natural images can be much more effectively sparsified in the transform domains,  we 
 generalized the idea to include signals that
can be sparsfied in the transform domains.  Our analysis showed that the transform domain sparsity can be equivalently represented as
low-rank Hankel structured matrix in the weighted k-space domain, whose weighting function is determined solely by the transform, not by the data.
In particular,  when   signals are effectively sparsified in dyadic wavelet transform,
the corresponding low rank Hankel matrix completion problem can be  implemented using  a pyramidal decomposition, which significantly reduces the overall computational
complexity and improves the noise robustness. %without sacrificing the reconstruction quality.
%Moreover, unlike other  weighting schemes applied at single scale, the pyramidal decomposition using  wavelets provided best reconstruction quality that was less sensitive to noise.
For parallel imaging data, we verified  that  by stacking Hankel matrix from each coil side by side, we  may fully exploit  the coil sensitivity diversity thanks to existence of
inter-coil annihilating filters.
%As an optimization method, we proposed an SVD-free structured matrix completion algorithm using ADMM with an initialization from LMaFit.
%, and the pyramidal representation significantly reduced
%the computational burden of the low rank matrix completion problem.

%
%derived from the annihilating filter
%relationship that can be obtained when the underlying image can be sparse in wavelet domain.
%In addition, ALOHA further exploited a novel annihilation property from the inter-coil relationships.
%We showed that the annihilating filter relationship can be exploited in terms of Hankel structured matrix from weighted k-space data, whose  missing k-space data can
%be estimated using low rank structured matrix completion algorithm. 

Reconstruction results from single coil static MR imaging confirmed that the proposed method outperformed the existing compressed sensing framework with TV regularization. 
We further demonstrated  superior performance of the proposed method  in static parallel MR imaging even without calibration data. Furthermore, the algorithm was successfully extended to dynamic accelerated MRI along k-t domain with both single coil and multi coil dynamic MR data. 
Therefore,  we concluded that the proposed algorithm was very effective in unifying the compressed sensing and parallel MRI.

\section*{Acknowledgement}

The authors would like to thank  Prof. Michael Unser  at EPFL  for the many insightful discussions.
The authors also like to thank Prof. Sung-Hong Park at KAIST for providing the single coil brain MR k-space data set.

% Generated by IEEEtran.bst, version: 1.13 (2008/09/30)

%
%\bibliographystyle{IEEEtran}
%\bibliography{submit_bib}
\end{document}